\numberwithin{equation}{section} \makeatletter
\renewcommand{\subsection}{\@startsection
{subsection}{2}{0mm}{\baselineskip}{-0.25cm}
{\normalfont\normalsize\bf}} \makeatother
\newtheorem{theorem}{Theorem}[section]
\newtheorem{lemma}[theorem]{Lemma}
\newtheorem{definition}[theorem]{Definition}
\newtheorem{remark}[theorem]{Remark}
\newtheorem{proposition}[theorem]{Proposition}
\newtheorem{example}[theorem]{Example}
\newtheorem{assumption}[theorem]{Assumption}
\newtheorem{problem}[theorem]{Problem}
\def \A {\mathcal A}
\def \E {\mathcal E}
\def \F {\mathcal F}
\def \G {\mathcal G}
\def \L {\mathcal L}
\def \P {\mathbf P}
\def \Q {\mathbf Q}
\def \I {{\mathbf 1}}
\def \R {\mathbb R}
\def \bF {\mathbb F}
\def \bG {\mathbb G}
\newcommand{\ud}{\mathrm d}
\newcommand{\ds}{\displaystyle}
\newcommand{\esp}[2][\mathbb E] {#1\left[#2\right]}
\newcommand{\espq}[2][\mathbb E^\Q] {#1\left[#2\right]}
\newcommand{\condespftilde}[2][\widetilde \F_t]       {\mathbb E\left.\left[#2\right|#1\right]}
\begin{document}

\author[C.~Ceci]{Claudia  Ceci}
\address{Claudia  Ceci, Department of Economics,
University ``G. D'Annunzio'' of Chieti-Pescara, Viale Pindaro, 42,
65127 Pescara, Italy.}

\author[K.~Colaneri]{Katia Colaneri}
\address{Katia Colaneri, Department of Economics and Finance,
 University of Rome Tor Vergata, Via Columbia, 2, 00133 Roma,  Italy.}

\author[A.~Cretarola]{Alessandra Cretarola}
\address{Alessandra Cretarola ({\large \Letter}), Department of Mathematics and Computer Science,
 University of Perugia, Via Luigi Vanvitelli, 1, 06123 Perugia, Italy.  {\em ORCID identifier:} \textit{https://orcid.org/0000-0003-1324-9342}. {\em E-mail address}: \href{mailto:alessandra.cretarola@unipg.it}{alessandra.cretarola@unipg.it}.}

\title[Indifference pricing via BSDEs under partial info]{Indifference pricing of pure endowments via BSDEs under partial information}

\date{}

\maketitle

\begin{abstract}
In this paper we investigate the pricing problem of a
pure endowment contract when the insurance company has a limited information on the mortality intensity of the policyholder. The payoff of this kind of policies depends on the residual life time of the insured as well as the trend of a portfolio traded in the financial market, where
investments in a riskless asset, a risky asset and a longevity bond are allowed.
We propose a modeling framework that takes into account mutual dependence between the financial and the insurance markets via an
observable stochastic process, which affects the risky asset and the mortality index dynamics.
Since the market is incomplete due to the presence of basis risk,
in alternative to arbitrage pricing we use expected utility maximization under exponential preferences as evaluation approach, which leads to the so-called indifference price. Under partial information this methodology requires filtering techniques that can reduce the original control problem to an equivalent problem in complete information.
 Using stochastic dynamics techniques, we characterize the indifference price of the insurance derivative in terms of the solutions of  two backward stochastic differential equations.  Finally, we discuss two special cases where we get a more explicit representation of the indifference price process.
\end{abstract}

{\bf Keywords}: Pure endowment;  partial information; backward stochastic differential equations; indifference pricing.

{\bf JEL Classification}: C02; G12; G22.

{\bf AMS Classification}: 91B30; 91B25; 93E20; 60G35.


\section{Introduction}

This paper deals with the pricing problem of some particular unit linked life insurance contracts, which are long term insurance policies between a policyholder and an insurance company. These kinds of contract are hybrid financial products embodying banking, securities and insurance components. Indeed,
the payoff depends on the insured remaining lifetime (insurance risk) and on the performance of the underlying stock or portfolio (financial risk). In this paper we focus on a {\em pure endowment} policy, which promises to pay an agreed amount if the policyholder is still alive on a specified future date.

In these contracts benefits are random. This makes for instance, traditional valuation principles for pricing life insurance products with deterministic payoffs, inappropriate. Since the 70's, it was clear that theory of financial valuation, suitably combined with mortality, was the right way forward, see, e.g. \citet{brennan1976pricing, boyle1977equilibrium, aase1994pricing}. In these papers the Black \& Scholes pricing methodology is applied under the hypotheses of market completeness and independence between financial and insurance setting. Since then, many efforts have been done to relax the assumption of completeness and several approaches have been proposed, for instance in  \citet{moller2003indifference, ludkovski2008indifference, bayraktar2009valuation, delong2012no, blanchet2017model}. However the problem of incorporating some kind of dependence between the financial and the insurance market, which is empirically observed, has started to be addressed only recently.

The goal of this paper is to study the pricing problem of a pure endowment life insurance contract in a general modeling framework that accounts for {\em mutual dependence} between the financial and the insurance markets and {\em partial information} of the insurance company on the mortality intensity of the policyholder. Precisely, we consider a discounted financial market with a riskless asset, a risky asset and a longevity bond written on the mortality index of the same age cohort of population of the policyholder.
We assume that the dynamics of the risky asset and the mortality index are governed by general diffusion processes with coefficients that depend on the same {\em observable} stochastic process representing economic and environmental factors.
The insurance company issues a pure endowment policy with maturity of $T$ years for an individual whose remaining lifetime is represented by a random time.
The partial information scenario refers to the situation where the insurance company knows at any time if the policyholder is still alive but cannot directly observe her/his mortality intensity, which is influenced by an exogenous unobservable factor describing the social/health status of the individual.

The insurance contract can be treated as a contingent claim in the hybrid market model given by the financial securities and the insurance portfolio, and the goal is to study the pricing problem for the insurance company. In our setting mortality intensity of the population and of the policyholder do not coincide in general. This translates into the presence of a basis risk that, even in the context of complete information, does not permit perfect replication of the contract via self-financing strategies and the financial-insurance market is incomplete.
In other words,  the insurance company cannot perfectly hedge its exposure by investing in a hedging instrument (the longevity bond), which
is based on the mortality of the whole population, rather then that of the insured, leaving a residual amount of risk, see e.g. \citet{biagini2016risk} for a deeper discussion on this issue. Therefore, in alternative to arbitrage pricing we use expected utility maximization under exponential preferences as evaluation approach, which leads to the so-called {\em indifference price} (see e.g. \citet{henderson2009} for a survey).
This is a well-known technique for pricing derivatives in incomplete markets and has been successfully applied to valuate insurance derivatives under full information  in e.g. \citet{becherer2003rational, ludkovski2008indifference, delong2009indifference, delong2013backward, eichler2017utility, liang2017indifference}. However, to the best of our knowledge, applications of this methodology  to insurance derivatives under partial information is an open problem.
We formulate the pricing problem from the perspective of the insurance company that issues the pure endowment and study the resulting control problem via a backward stochastic differential equation (in short BSDE) approach.
Our main result is a characterization of the indifference price process in terms of unique solutions of BSDEs arising from two different optimization problems that involve the insurance claim and a pure investment. The former is a stochastic control problem under partial information and presents a jump due to mortality of the policyholder. Our setting is non standard since mortality intensity is stochastic and unobservable. By applying filtering techniques, we transform the optimization problem into an equivalent problem in complete information involving only observable processes. Then,
using the structure of the jump component, we can transform the equation into a continuous BSDE with quadratic-exponential driver. Although existence and uniqueness of the solution to such type of BSDEs have been studied very recently in the literature and they only cover a setting under full information. Therefore, these results do not apply straightforwardly to our framework and
need to be suitably adjusted. To the best of our knowledge, this is the first time that the problem of evaluating  an insurance claim via indifference pricing when there are restrictions on the available information in a general setting with mutual dependence between the financial and the insurance framework, is investigated.

We conclude with a few remarks on our filtering results (see Appendix \ref{appendix:filtering}).
Due to partial information on the policyholder mortality intensity, we need to apply filtering theory to solve the optimization problem with the insurance derivative.
We stress that we deal with a non trivial filtering problem that combines partial information of an unobservable general Markov process with progressive filtration enlargement, due to the presence of a (random) time of death.
To estimate the mortality intensity of the policyholder, we use observations of the mortality intensity of the reference population enlarged with the information on the time of death. In addition, we obtain an explicit expression for the filter via linearization with respect to the non-enlarged filtration.
We also discuss uniqueness of the solution of the filtering equation using an approach based on the Filtered Martingale Problem. The resulting filter is an infinite dimensional probability measure valued process, that shows up as state variable in the optimization and therefore, we study the problem following the BSDE approach.

\indent The outline of the paper is as follows. Section \ref{sec:model}  introduces the mathematical framework and describes
the combined model
allowing for a mutual dependence between the financial and the insurance markets and a limited information on the mortality intensity of the policyholders. The pricing problem formulation under partial information via utility indifference pricing can be found in Section \ref{sec:pricing_problem}. In Section \ref{sec:bsdeGEN} we study the resulting stochastic control problems following a BSDE approach and characterize the log-value process corresponding to the problem with (without, respectively) the pure endowment contract in terms of the solution to a quadratic-exponential (quadratic, respectively) BSDE. A characterization of the indifference price of the pure endowment policy is given in Section \ref{sec:indiff-price}. Some concluding remarks can be found in Section \ref{sec:conclusion}.
We address the filtering problem in Appendix \ref{appendix:filtering}. How to compute the longevity bond price process is shown in Appendix \ref{app:longevity}.
A Technical result can be found in Appendix \ref{appendix:tech_res}.

\section{Modeling framework} \label{sec:model}

We consider the problem of an insurance company that issues a unit-linked life insurance contract. This type of contract has a relevant link with the financial market. Indeed, the value of the policy is determined by the performance of the underlying stock or portfolio. Moreover, it also depends on the remaining lifetime of the policyholder. Therefore, we construct a {\em combined financial-insurance market model} and treat the life insurance policy as a contingent claim. We will
define the suitable modeling framework via the progressive enlargement of filtration approach, which allows for possible dependence between the financial market and the insurance portfolio.

We start by fixing a complete probability space  $(\Omega,\F,\P)$ endowed with a complete and right continuous filtration $\bF=\{\F_t, \ t \in[ 0 , T]\}$, where $T>0$ is a fixed and finite time horizon, such that  $\F=\F_T$, $\F_0=\{\Omega, \emptyset\}$.

On this filtered probability space we consider a Markov stochastic process $Z=\{Z_t,\ t \in [0,T]\}$
 with c\'adl\'ag trajectories and values in a locally compact and separable space $\mathcal{Z}$. We assume that $Z$ is {\em not be observable} by the insurance company and denote by $\bF^Z=\{\F^{Z}_t, \ t \in [0,T]\}$, with $\F^{Z}_t:=\sigma\{Z_u, \ 0 \leq u \leq t\}$, for each $t \in [0,T]$, its natural filtration.
We may interpret the process $Z$ as an environmental process describing the social level/ health status of an individual to be insured.

We assume that the probability space supports three $\P$-independent standard $\bF$-Brownian motions $W^j=\{W_t^j,\ t \in [0,T]\}$, with $W_0^j=0$, for each $j=1,2,3$, which
are also $\P$-independent of the stochastic factor $Z$. Here, $W^j$, for $j=1,2,3$, are supposed to drive the underlying financial market (see Subsection \ref{sec:market}) and the mortality intensity defined on the same age cohort of the population, see \eqref{def:index_intensity}.
Now, denote by $\bF^{W^j}=\{\F^{W^j}_t, \ t \in [0,T]\}$, with $\F^{W^j}_t:=\sigma\{W_u^j, \ 0 \leq u \leq t\}$,  $j=1,2,3$, for every $t \in [0,T]$, the canonical filtrations of $W^j$, for every $j=1,2,3$, respectively. In addition, set
\begin{equation}\label{ftilde}
\widetilde \F_t:=\F_t^{W^1} \vee \F_t^{W^2} \vee \F_t^{W^3}, \quad t \in [0,T]
\end{equation}
and $\widetilde \bF=\{\widetilde \F_t,\ t \in [0,T]\}$.
We assume that the reference filtration $\bF$ is given by
\begin{equation}\label{eq:filtrF}
\bF = \widetilde \bF \vee \bF^{Z},
\end{equation}
completed by $\P$-null sets, so that, it contains all knowledge of the financial-insurance market except for the information regarding the policyholder survival time.

\subsection{Construction of the death time and mortality intensities}\label{sec:tau}

We consider an individual aged $l$ at time $0$ to be insured. Let $\mu=\{\mu_t,\ t \in [0,T]\}$ be an $\bF$-adapted process modeling the mortality intensity of an equivalent age cohort of the population. This process is observable and can be computed using publicly available data of the survivor index $S^\mu=\{S_t^\mu,\ t \in [0,T]\}$ given by
\begin{equation}
S_t^\mu:=\exp\left(-\int_0^t \mu_s \ud s\right), \quad t \in [0,T].
\end{equation}
We assume that $\mu$ evolves according to the following stochastic differential equation:
\begin{equation}\label{def:index_intensity}
\ud \mu_t =b^\mu(t,\mu_t,Y_t) \ud t + \sigma^\mu(t,\mu_t,Y_t) \ud W_t^2,\quad \mu_0 \in \R^+,
\end{equation}
where $Y=\{Y_t,\ t \in [0,T]\}$ is an {\em observable} stochastic process representing economic and environmental factors, satisfying
\begin{equation}\label{def:Y}
\ud Y_t =b^Y(t,Y_t) \ud t + \sigma^Y(t,Y_t) \ud W_t^3,\quad Y_0=y_0 \in \R.
\end{equation}
Here, functions $b^\mu(t,\mu,y)$, $b^Y(t,y)$, $\sigma^\mu(t,\mu,y)>0$, and $\sigma^Y(t,y)>0$ are  measurable and such that the system of equations \eqref{def:index_intensity}-\eqref{def:Y} admits a unique strong solution, with $ \mu_t \geq 0$, $\P$-a.s., for all $t \in [0,T]$, see e.g. \citet{oksendal2013}.

We remark that it is not difficult to generalize the results of the paper to the case where Brownian motions $W^2$ and $W^3$ are correlated, and our choice for independence is just due to notational reasons.

We now introduce a few key examples to better illustrate our model.
\begin{example}
We consider a generalized Cox-Ingersoll-Ross model to represent the trend of the mortality intensity $\mu$, see , e.g.  \citet{dahl2004stochastic,biffis2005}. In  \citet{biffis2005},  mortality intensity of the sample population follows an affine dynamics with stochastic drift, given by
\begin{equation}\label{CIR}
\ud \mu_t=a^\mu\left(Y_t-\mu_t\right) \ud t + \sigma^\mu \sqrt{\mu_t} \ud W_t^2, \quad \mu_0\in \R^+,
\end{equation}
where $a^\mu,\ \sigma^\mu \in \R^+$  and the process $Y$ satisfies
\begin{equation}\label{CIR2}
\ud Y_t=a^Y(b^Y(t)-Y_t) \ud t + \sigma^Y \sqrt{Y_t-b^*(t)}\ud W_t^3, \quad Y_0=y_0\in \R^+,
\end{equation}
for some nonnegative, bounded and continuous functions $b^Y(t), b^*(t)>0$ and $\sigma^Y>0$.
It is known that this model describes well mortality intensity and it is quite flexible to capture stylized features, such as fluctuations around a target mean (given here by $b^Y(t)$), and $\P$-a.s. positivity of the intensity process $\mu$ which is satisfied for instance when $b^Y(t)\geq b^*(t)$ for every $t \in [0,T]$  and $y_0 \ge b^*(0)$. See \citet{biffis2005} for a deeper discussion.  
\end{example}

\begin{example}\label{ex1}
A simplified setting is obtained taking the coefficients in the dynamics of the process $\mu$ independent of the stochastic factor $Y$, that is
\begin{equation}\label{def:index_intensity_indep}
\ud \mu_t =b^\mu(t,\mu_t) \ud t + \sigma^\mu(t,\mu_t) \ud W_t^2,\quad \mu_0 \in \R^+.
\end{equation}
This describes a situation where the mortality intensity is not affected by the observable factor, and it will be used as a benchmark in the sequel.
\end{example}

\begin{example} \label{C1}
For comparison reasons, we also consider an example which will be discussed in more details later. Assume that the stochastic factor $Y$ follows the dynamics given by equation \eqref{def:Y} and that the mortality intensity $\mu$ is described by a nonnegative function $\mu(t,y)$, so that for all $t \in [0,T]$ we have that $\mu_t=\mu(t, Y_t)$. If $\mu \in C^{1,2}([0,T]\times\R)$,
by applying It\^{o}'s formula we have that
\begin{equation}\label{eq:ex2}
\ud \mu_t=\left(\frac{\partial \mu}{\partial t}(t, Y_t)+\frac{\partial \mu}{\partial y}(t, Y_t)b^Y(t, Y_t)+\frac{1}{2}\frac{\partial^2 \mu}{\partial y^2}(t, Y_t)\right)\ud t + \frac{\partial \mu}{\partial y}(t, Y_t)\sigma^Y(t, Y_t) \ud W^3_t,
\end{equation}
with $\mu_0=\mu(0, Y_0)$. The setting provided by equations \eqref{def:Y} and \eqref{eq:ex2} corresponds to the case where the mortality intensity $\mu$ and the observable stochastic factor $Y$ are perfectly correlated.
\end{example}

To describe the stochastic residual lifetime of the individual, we adopt the canonical construction of a random time in terms of a given hazard process, in analogy to  reduced-form credit risk models. For this reason,
we shall postulate that the underlying filtered probability space $(\Omega, \F, \bF, \P)$ is sufficiently rich to support a random variable $\Theta$ having unit exponential distribution, $\P$-independent of $\F_T$. Let $\lambda:[0,T]\times \R^+\times \mathcal Z \longrightarrow (0, + \infty)$ be a positive function such that $\esp{\int_0^T \lambda(s,\mu_s,Z_s)\ud s} < \infty$
and define the random time $\tau:\Omega\longrightarrow \R^+$ by setting
\begin{equation}
\tau:=\inf\left\{t \geq 0 : \ \int_0^t \lambda(s,\mu_s,Z_s)\ud s \geq \Theta\right\}.
\end{equation}
In this framework, $\tau$ represents the remaining lifetime of an individual and $\lambda$ is the $\bF$-{\em mortality intensity} process. The associated $\bF$-hazard process is given by  $\left\{\int_0^t \lambda(s,\mu_s,Z_s) \ud s, \ t \in [0,T]\right\}$.
Note that by the $\P$-independence assumption on $\Theta$ and the $\F_t$-measurability of $\int_0^t \lambda(s,\mu_s,Z_s) \ud s$, we get that
\begin{equation}\label{eq:cond_distribution_F}
\P(\tau> t|\F_t)=\P\left(\int_0^t\lambda(s,\mu_s,Z_s) \ud s < \Theta\Big{|}\F_t\right)=e^{-\int_0^t\lambda(s,\mu_s,Z_s)\ud s}, \quad t\in[ 0, T],
\end{equation}
 and the following property of the canonical construction of the remaining lifetime $\tau$ holds
 \begin{equation}\label{eq:cond_equivalente_hpH}
\P(\tau\leq t|\F_t)= \P(\tau\leq t|\F_T), \quad  t \in [0, T],
\end{equation}
see, for instance, \citet[Section 8.2.1]{bielecki2002}.

\begin{remark}
It is intuitively clear that, in general the mortality rate of the insured $\lambda$ is different from that of its age cohort, $\mu$. In our model $\lambda$ is a function of $\mu$ as well as the unobservable process $Z$.
 A possible choice could be
$$
\lambda(t,\mu_t,Z_t) = \mu_t \tilde \lambda(Z_t), \quad t \in [0,T],$$
where $\tilde \lambda(Z_t)$ is a strictly positive function of the environmental factor $Z_t$ for every $t$, meaning that when $\tilde \lambda(z)<1$ the risk of the policyholder is smaller than that of the reference population, and bigger if $\tilde \lambda(z)>1$.
\end{remark}
Since the random time $\tau$ is not a stopping time with respect to filtration $\bF$,  we introduce an enlarged filtration that makes $\tau$ a stopping time. First, we define the death indicator process $H = \{H_t, \ t \in [0,T]\}$ associated to $\tau$ as follows
\begin{equation}
H_t:=\I_{\{\tau\leq t\}}, \quad  t \in[0,T],
\end{equation}
and set $\F^H_t:=\sigma\{H_u, \ 0\leq u \leq t\}$, for every $t \in [0,T]$.
Let $\bG=\{\G_t,\ t\in [0, T]\}$ be the enlarged filtration given by
\begin{equation}\label{eq:filtrG}
\G_t:=\F_t\vee\F_t^H, \quad t \in [0,T].
\end{equation}
Then, $\bG$ is the smallest filtration which contains $\bF$ and such that $\tau$ is a $\bG$-stopping time. Filtration $\bG$ plays the role of the market full information: it contains all the knowledge about the insurance and the market portfolio.

As an immediate consequence of the canonical construction of the residual lifetime $\tau$, we get that the so-called {\em martingale invariance property} between filtrations $\bF$ and $\bG$ holds, i.e. every $(\bF, \P)$-(local) martingale is also a $(\bG, \P)$-(local) martingale, see \citet{bremaud1978changes}. Moreover, the process $\{H_t-\int_0^{t\wedge \tau} \lambda(s,\mu_s,Z_s)\ud s,\ t \in [0,T]\}$ is a $(\bG,\P)$-martingale and $\tau$ is a totally inaccessible $\bG$-stopping time.

\subsection{The combined financial-insurance market model} \label{sec:market}
We define a combined financial-insurance market on the filtered probability space $(\Omega, \G, \bG, \P)$, with $\G=\G_T$, where the tradeable securities are given by a riskless asset, a risky asset and a longevity bond. We assume that the price process of a risk free asset is equal to $1$ at any time,
and that the risky asset has discounted price process $S^1=\{S_t^1,\ t\in [0, T]\}$
 given by the following geometric diffusion with coefficients affected by the economic and environmental factor $Y$
 \begin{equation} \label{def:S}
 \ud S_t^1 = S_t^1\left(\mu^S(t, Y_t)\ud t +\sigma^S(t,Y_t) \ud W^1_t\right), \quad S_0^1 = s_0^1 \in \R^+.
 \end{equation}
The longevity bond has discounted price process $S^2=\{S^2_t,\ t\in [0, T]\}$ satisfying the following stochastic differential equation with coefficients depending on the equivalent age cohort mortality intensity $\mu$ and the stochastic factor $Y$
\begin{align}\label{eq:long_bond}
\ud S^2_t= S^2_t\left(\mu^B(t, \mu_t, Y_t)\ud t +c^B(t,\mu_t,Y_t)\ud W_t^2 + d^B(t,\mu_t,Y_t)\ud W_t^3\right), \quad S_0^2=s_0^2\in \R^+.
\end{align}
Here $\mu^S(t,y)$, $\sigma^S(t,y)>0$, $\mu^B(t,\mu,y)$, $c^B(t,\mu,y)>0$ and  $d^B(t,\mu,y)>0$ are measurable functions such that the system of equations \eqref{def:index_intensity}-\eqref{def:Y}-\eqref{def:S}-\eqref{eq:long_bond} admits a unique strong solution. The motivation for the dynamics of the longevity bond price process is given in Appendix \ref{app:longevity}.

It is clear that the proposed modeling framework allows for mutual dependence between the financial and the insurance markets via the stochastic factor $Y$, which affects all stochastic processes dynamics given in \eqref{def:index_intensity}, \eqref{def:S} and
\eqref{eq:long_bond}.

Notice that the primary financial-insurance market defined on the probability space $(\Omega, \widetilde\F_T, \widetilde{\bF}, \P)$ given by the riskless asset, the risky asset and the longevity bond is incomplete. In the following, we discuss two examples where the primary financial-insurance market is instead complete.

\begin{example}\label{ex11}
We consider the case where the primary market is independent of the observable factor $Y$. This means that the mortality intensity $\mu$ is described as in  Example \ref{ex1} and the dynamics of the risky asset and the longevity bond are given, respectively, by

\begin{align}
 \label{def:S_indep}
 \ud S_t^1 = S_t^1\left(\mu^S(t)\ud t +\sigma^S(t) \ud W^1_t\right), \quad S_0^1 = s_0^1 \in \R^+,\\
\label{eq:long_bond_indep}
\ud S^2_t= S^2_t\left(\mu^B(t, \mu_t)\ud t +c^B(t,\mu_t)\ud W_t^2 \right), \quad S_0^2=s_0^2\in \R^+.
\end{align}
It is easy to see that in this case the primary market is complete.
\end{example}

\begin{example}\label{C11}
Now, we consider a market model where the mortality intensity $\mu$ is given as in Example \ref{C1}. In this setting the price dynamics of the risky asset is as in equation \eqref{def:S} and the price of the longevity bond follows
\begin{align}\label{eq:long_bond1}
\ud S^2_t= S^2_t\left(\mu^B(t, Y_t)\ud t + d^B(t,Y_t)\ud W_t^3\right), \quad S_0^2=s_0^2\in \R^+.
\end{align}
We notice that, although we still have mutual dependence between the risky asset and the longevity bond price dynamics, the primary financial-insurance market is complete.
\end{example}

We suppose that the insurance company issues a unit-linked life insurance policy. This is a long term insurance contract between the policyholder and insurance company whose benefits are linked to financial assets. In particular, we consider a pure endowment contract with maturity of $T$ years, which can be defined as follows.
\begin{definition}\label{def:pure endowment}
A {\em pure endowment} contract with maturity $T$ is a life insurance policy where the sum insured is paid at time $T$ if the insured is still alive. The associated final value is given by the random variable
\begin{equation}\label{eq:pure endowment}
 G_T := \xi \I_{\{\tau>T\}},
\end{equation}
where $\xi \in L^2(\widetilde \F_T,\P)$ represents the payoff of a European-type contingent claim with maturity $T$.
\end{definition}

From now on we make the following boundedness assumption on the payoff of the insurance policy.
\begin{assumption} \label{ass:V-bound}
The random variable $\xi$ in \eqref{eq:pure endowment} is bounded, that is,
\begin{equation}\label{bound}
|\xi| \leq k \quad  \P-a.s. \end{equation}
with $k$ positive constant.
\end{assumption}

Contracts of this type are for instance unit-linked policies with capped benefits, whose payoff is given by $\xi=\min\{S_T^1, K_T\}$ for some bounded $K_T>0$. In this contract the insurance company aims to cap the benefit amount in order to limit the risk exposure to a large increase of the price of the underlying. Another example is given by policies with capped benefits and a guarantee where benefits payable at time $T$ are specified by $\xi=\min\{\max\{S_T^1, G_T\}, K_T\}$ with $K_T > G_T >0$. In this case the insured has both limited downside risk and upside potential and the insurance company is also exposed to a limited financial risk.

Throughout the rest of the paper, we work under the following integrability conditions.
\begin{assumption} \label{mgloc}
We assume that the function $\lambda(t,\mu,z)>0$ is bounded and continuous  w.r.t. $z \in \mathcal Z$ and that
\begin{align}
&\int_0^T \left\{\mu^S(t, Y_t)^2+\sigma^S(t,Y_t)^2\right\} \ud t  < \infty, \quad \P-\mbox{a.s.},\\
& \int_0^T\left\{\mu^B(t, \mu_t, Y_t)^2 + c^B(t, \mu_t, Y_t)^2 + d^B(t, \mu_t, Y_t)^2 \right \}\ud t < \infty, \quad \P-\mbox{a.s}.,\\
&\int_0^T \left\{\left(\frac{\mu^S(t, Y_t)}{\sigma^S(t,Y_t)}\right)^2 + \frac{\mu^B(t,\mu_t,Y_t)^2}{c^B(t,\mu_t,Y_t)^2 + d^B(t, \mu_t, Y_t)^2}\right\}
 \ud t < \infty, \quad \P-\mbox{a.s.}.
\end{align}
\end{assumption}
Note that the discounted asset price processes $S^1$ and $S^2$ are continuous $(\bF,\P)$-semimartingales and also $(\bG,\P)$-semimartingales. As a consequence the underlying financial-insurance market model is arbitrage-free.

\subsection{Available information and filtering}

We assume that the insurance company observes prices of the assets  negotiated on the markets,  $S^1$ and  $S^2$ and the death time of the insured $\tau$, but it has not full information about the intensity  mortality intensity $\lambda$, which depends on $Z$.
Therefore, the available information to the insurance company is given by $\widetilde\bG =\{\widetilde\G_t,\ t\in [0, T]\}$ where
\begin{equation}\label{gtilde}
\widetilde \G_t:=\widetilde \F_t\vee\F_t^H, \quad t \in [0,T],
\end{equation}
where $\widetilde \bF$ is defined in \eqref{ftilde}.
Note that
\begin{equation*}\widetilde \bG\subseteq \bG =  \bF\vee\bF^H = \widetilde \bF \vee  \bF^Z \vee\bF^H, \end{equation*}
and  we refer to $\widetilde \bG$ as the available information to the insurance company. We assume throughout the paper that all filtrations satisfy the usual hypotheses of completeness and right-continuity. \\

The intensity of the mortality process $H$ with respect the information flow can be characterized via a filtering approach. Denote by  $\L^Z$ the Markov generator of $Z$ and by  $\mathcal{D}\subseteq C_b(\mathcal Z)$ the domain of the generator, that is for every function $f \in \mathcal{D} \subseteq C_b(\mathcal Z)$
\begin{align}\label{eq:f_semimg}f(Z_t) = f(z_0) + \int_0^t {\L}^Z f(Z_s) \ud s + M^Z_t, \quad t \in [0,T],
\end{align}
for some $(\bF^Z, \P)$-martingale $M^Z = \{M^Z_t,\ t \in [0,T]\}$, with $z_0\in \mathcal Z$.\\

In the rest of the paper we assume that the following conditions holds.
\begin{assumption}\label{mgp}
\begin{itemize}
\item[]
\item[(i)]  The martingale problem for the operator $\L^Z$, for any initial value $z_0 \in \mathcal Z$, is well posed on the space of  c\'adl\'ag trajectories with values in $\mathcal Z$, $D_{\mathcal Z}[0, T]$;
\item[(ii)] ${\L}^Z f \in C_b(\mathcal Z)$ for any  $f \in \mathcal{ D} $;
\item[(iii)]  $\mathcal{D}$ is an algebra dense in $C_b(\mathcal Z)$.
\end{itemize}
\end{assumption}

We define the filter process by setting
\begin{equation}
\pi_t(f) := \esp{ f(Z_t)  \Big{|}  \widetilde\G_t}, \quad t \in [0,T],\label{eq:filter}
\end{equation}
for every measurable function $f$ such that $\esp{|f(Z_t)|}< \infty$, for each $t \in [0,T]$. It is known that $\pi(f)$ is a probability measure-valued process with  c\'{a}dl\'{a}g  trajectories (see \citet{kurtz1988unique}), which provides the conditional law of $Z$ given the information flow $\widetilde\bG$.
Then, the  $\widetilde \bG$-predictable intensity of  $H$  is given by
\begin{equation}\label{fint}
(1-H_{t^-}) \pi_{t^-}(\lambda), \quad t \in [0,T],
\end{equation}
where $\pi_{t^-}$ denotes the left version of $\pi_{t}$ and $\pi_{t}(\lambda)$ is short for $\pi_t(\lambda(t, \mu_t, \cdot))$. This means that the compensated process $M^\tau=\{M^\tau_t, \ t \in [0,T]\}$ defined as
\begin{equation}\label{def:mtau}
M^\tau_t:=H_t-\int_0^{t \wedge\tau} \pi_{s^-}(\lambda)\ud s=H_t-\int_0^{t} (1-H_{s^-}) \pi_{s^-}(\lambda)\ud s,\quad  t \in [0,T],
\end{equation}
turns out to be a $(\widetilde \bG, \P)$-martingale.

It is well-known that the exact computation of the filter can be obtained only in a few cases (for instance, the linear Gaussian case) and that in general, numerical approximations must be employed to solve the filtering equation. In our setting, however, we can explicitly characterize the filter by giving the following representation (the proof  is discussed in detail
in Appendix \ref{appendix:filtering}). 

\begin{proposition}\label{nuova_filter}

The filter $\pi = \{\pi_t,\ t \in [0,T]\}$ coincides on $\{t < T \wedge \tau\}$ with  the $\widetilde \bF$-adapted process $ \widehat \pi=\{\widehat \pi_t,\ t \in [0,T]\}$
\begin{equation}\label{eq:hat_pi}
\widehat  \pi_t(f)(\omega)  = \frac{\esp{ f(Z_t) e^{-\int_0^t  \lambda (s,\mu_s(\omega), Z_s)  \ud s} }} {\esp{ e^{-\int_0^t  \lambda (s,\mu_s(\omega), Z_s)  \ud s} }}.
\end{equation}
Moreover for $t =\tau<T$,
 $$\pi_{\tau}(f)=  \frac{\widehat  \pi_{\tau^-}(\lambda f)}{ \widehat  \pi_{\tau^-}(\lambda)},$$
and for $ \tau <t\leq T$,
$$ \pi_t(f) :=  {\mathbb E}_{\tau, \pi_\tau}[ f(Z_t) ],$$
where $ {\mathbb E}_{\tau, \pi_\tau}$ denotes the conditional expectation given the law of $Z$ at time $\tau$ equals to $\pi_\tau$.
\end{proposition}


\section{The pricing problem} \label{sec:pricing_problem}

The goal of this paper is to price a pure endowment policy with payoff given by \eqref{eq:pure endowment} in a partially observable setting, where the insurance company does not have access to the full
information given by the filtration $\bG$. In particular, the insurance company is not allowed to observe the
evolution of the stochastic factor $Z$, which therefore implies that its decisions are based on the observation filtration $\widetilde \bG$. Moreover, we recall that our general setting accounts for possible mutual dependence between the financial and the insurance framework, which is a desirable characteristic when dealing with mortality derivatives. Indeed, nowadays it is commonly recognized that, in the long term, demographic changes may affect the economy and vice-versa. Unit-linked life insurance contracts have previously been studied under partial information in \citet{ceci2015hedging, ceci2017hedging}, where the goal was to solve the hedging problem in an incomplete market via local risk-minimization. Precisely, in  \citet{ceci2015hedging} the independence between financial market and  insurance portfolio was assumed, while in \citet{ceci2017hedging} the authors considered a more general situation where mutual dependence between  financial and insurance context is allowed.

Insurance-financial market models are typically incomplete due to the fact that mortality events are in general not hedgeable. This implies that insurance contracts may have different risk-neutral prices. One of the criteria that can be used to compute a fair price corresponds to identify insurance company's preferences towards the risk via a specific utility function and maximize the expected utility whether it holds the insurance claim or not. In other words, to characterize the utility indifference price, which is the price $p$ that makes the insurance company indifferent between not selling the policy and selling the policy at price $p$ now and paying the benefits at maturity.

In this paper, we follow a utility indifference pricing approach assuming that the insurance company is endowed with an exponential utility function of the form
\begin{equation}
U(x)= -e^{-\alpha x}, \quad x \in \R,
\end{equation}
where $\alpha >0$ is a given constant, representing a coefficient of absolute risk aversion. This form of the utility function is frequently assumed in the literature and allows for more explicit computations of the price.

Suppose that the insurance company has initial wealth $x$ and it invests this amount in the money market account, in the risky asset and in the longevity bond, following a self-financing strategy.

\noindent Set ${\mathbf{S}}=(S^1,S^2)^\top$  and let ${\boldsymbol{\theta}} = (\theta^1, \theta^2)^\top=\{(\theta^1_t, \theta^2_t)^\top, \ t \in [0,T]\}$  be the amount of wealth invested in the risky asset and the longevity bond respectively. Given an initial wealth $x_0 \in \R^+$ the portfolio value $X^\theta=\{X_t^\theta,\ t \in [0,T]\}$ satisfies
\begin{align}
\ud X^\theta_t = & {\boldsymbol{\theta}}^\top_t \frac{\ud {\mathbf{S}}_t}{{\mathbf{S}}_t}
=\left(\theta^1_t \mu^S(t, Y_t)+ \theta^2_t \mu^B(t, \mu_t, Y_t)  \right) \ud t \\
& \qquad \qquad + \theta^1_t \sigma^S(t, Y_t) \ud W_t^1 +  \theta^2_t \left[ c^B(t, \mu_t, Y_t) \ud W_t^2 +
d^B(t, \mu_t, Y_t) \ud W_t^3 \right], \label{def:port_value}
 \end{align}
 with  $X^\theta_0 = x_0 \in \R^+$.

In the case where it sells the insurance contract
the information at its disposal is given by the filtration $\widetilde \bG$ defined in \eqref{gtilde},  whereas in the case of pure investment by the filtration $\widetilde \bF$ given in \eqref{ftilde}.
The set of admissible strategies is defined below.
\begin{definition}\label{def:selffinancing}
 An admissible strategy is a self-financing portfolio identified by an $\widetilde \bF$-predictable (or even $\widetilde \bG$-predictable), $\R^2$-valued process ${\boldsymbol{\theta}} = (\theta^1, \theta^2)^\top$ such that
\begin{align}
&\int_0^T \left\{\left( \theta^1_t \sigma^S(t, Y_t)\right)^2 +|\theta^1_t\mu^S(t, Y_t)|\right\} \ud t< \infty, \quad \P-\mbox{a.s.},\label{eq:integ_1}\\
&\int_0^T\left\{ (\theta^2_t)^2 \left[c^B(t, \mu_t, Y_t)^2 + d^B(t, \mu_t, Y_t)^2 \right] + |\theta^2_t\mu^B(t,\mu_t,Y_t)|\right\}\ud t< \infty, \quad \P-\mbox{a.s.} \label{eq:integ_2}
\end{align}
 and that the family
\begin{equation}\label{eq:integ_3}
\{  e^{-\alpha X^\theta_\eta}, \mbox{ for all } \widetilde\bF-\mbox{stopping times} \ \eta\leq T\} \ (\mbox{ or } \widetilde\bG-\mbox{stopping times})
\end{equation}
is $\P$-uniformly integrable.
We denote by $\mathcal A(\widetilde \bF)$ and $\mathcal A(\widetilde \bG)$, the set of admissible $\widetilde \bF$-predictable and $\widetilde \bG$-predictable strategies, respectively.
\end{definition}

\noindent In order to characterize the indifference price, we introduce the optimal investment problems with and without the insurance derivative.
First, suppose that, at time $t$, the insurance company sells a pure endowment contract with payoff given by \eqref{eq:pure endowment}. Then, the  goal of the insurance company is the following.
\begin{problem}\label{p1}
To maximize the expected utility of its terminal wealth, i.e. to solve
\begin{equation}\label{eq:problem}
\sup_{{\boldsymbol{\theta}} \in \mathcal A(\widetilde \bG)}\esp{-e^{-\alpha \left(X^\theta_T-G_T\right)}}.
\end{equation}
\end{problem}
For every $(t,x) \in [0,T] \times \R^+$, the value process in a dynamic framework is given by
\begin{equation}\label{eq:dyn_problem}
\widetilde V_t(x):= {{\rm ess}\inf}_{{\boldsymbol{\theta}} \in \mathcal A_t(\widetilde \bG)}  \esp{e^{- \alpha \left(x + \int_t^T {\boldsymbol{\theta}}^\top_u \frac{\ud {\mathbf{S}}_u}{ {\mathbf{S}}_u} - G_T\right)} \Big{|} \widetilde \G_t} = e^{- \alpha x} V_t^G,
\end{equation}
where $\mathcal A_t(\widetilde \bG)$  denotes the class of admissible $\widetilde \bG$-predictable controls on the interval $[t,T]$ and the process $V^G=\{V_t^G,\ t \in [0,T]\}$ is given by
\begin{equation}\label{VG}
V_t^G := {{\rm ess}\inf}_{{\boldsymbol{\theta}} \in \mathcal A_t(\widetilde \bG)}  \esp{e^{- \alpha \left(\int_t^T {\boldsymbol{\theta}}^\top_u \frac{\ud {\mathbf{S}}_u}{{\mathbf{S}}_u} - G_T\right)} \Big{|} \widetilde \G_t}, \quad t \in [0,T].
\end{equation}
Hence, Problem
\ref{p1} can be written in terms of the process $V^G$
$$
\sup_{{\boldsymbol{\theta}} \in \mathcal A(\widetilde \bG)}\esp{-e^{-\alpha \left(X^\theta_T-G_T\right)}} = - e^{- \alpha x_0} V_0^G.
$$

\begin{remark}\label{r0}
Since ${\boldsymbol{\theta}}=(\theta^1,\theta^2)^\top=(0,0)^\top$ is an admissible strategy we get that for every $t \in [0,T]$,
 $V_t^G \leq \esp{e^{\alpha G_T} | \widetilde \G_t }$  $\P$-a.s.  which implies  by Assumption \ref{ass:V-bound}  that  $V^G_t \leq  e ^{\alpha k}$, $ \P-a.s.$ for each $t \in [0,T]$. Clearly, $V^G_t \geq 0$ $\P$-a.s. and if there exists an optimal strategy, then  $V_t^G >0$  $\P$-a.s..
\end{remark}

\begin{remark}
We remark here that, since $G_T = \xi \I_{\{\tau>T\}}$  and  the mortality intensity $\lambda(t,\mu_{t},Z_{t^-})(1-H_{t^-})$ is not observable by the insurance company, we are dealing with a utility maximization problem in a partial information framework.  The idea is to consider an  equivalent control problem under full information where the  unobservable intensity of $\tau$ is replaced by its filtered estimate, see \eqref{fint}.
\end{remark}
Now, we consider the case where the insurance company simply invests its wealth in the market, without writing the insurance derivative. Then, the objective is the following.
\begin{problem}\label{p2}
To maximize the expected utility of his/her terminal wealth, i.e. to solve
\begin{equation}\label{eq:problem2}
\sup_{{\boldsymbol{\theta}} \in \mathcal A(\widetilde \bF)}\esp{-e^{-\alpha X_T^\theta}}.
\end{equation}
\end{problem}
For every $(t,x) \in [0,T]\times \R^+$, the associated value process is given by
\begin{equation}\label{eq:problem0}
\widetilde V^0_t(x):= {{\rm ess}\inf}_{{\boldsymbol{\theta}} \in \mathcal A_t(\widetilde \bF)}  \esp{e^{- \alpha\left(x + \int_t^T {\boldsymbol{\theta}}^\top_u \frac{\ud {\mathbf{S}}_u}{ {\mathbf{S}}_u}\right) } \Big{|} \widetilde \F_t} =  e^{- \alpha x} V^0_t,
\end{equation}
where  $\mathcal A_t(\widetilde \bF)$ is  the set  of admissible  $\widetilde \bF$-predictable controls on the interval $[t,T]$ and $V^0=\{V_t^0,\ t \in [0,T]\}$ is defined as
\begin{equation}\label{V0}
V^0_t := {{\rm ess}\inf}_{{\boldsymbol{\theta}} \in \mathcal A_t(\widetilde \bF)}  \esp{e^{- \alpha \int_t^T {\boldsymbol{\theta}}^\top_u \frac{\ud {\mathbf{S}}_u}{{\mathbf{S}}_u} } \Big{|} \widetilde \F_t}  \quad t \in [0,T].
\end{equation}

\begin{definition}

The {\em utility indifference price} or {\em  reservation price}  $p^\alpha$ of the insurance company related to the pure endowment contract
is defined at any time $t \in [0,T]$ as  the $\widetilde \bG$-adapted process  implicit solution to the equation
$$
\widetilde V_t(x + p_t^\alpha) = \widetilde V^0_t(x).
$$
This means that starting at time $t$ with capital  $x$, the insurance company has the same maximal utility from selling the insurance product for $p_t^\alpha$  at time $t$ and solely trading on $(t, T ]$ without  writing the contract.
\end{definition}

\noindent If $V_t^G>0$ and $V^0_t>0$, $\P$-a.s., for every $t \in [0,T]$,  we get that $p^\alpha$ does not depend on the initial capital $x$ and it is given by
\begin{equation}\label{ip}
p_t^\alpha:= \frac{1}{\alpha} \log \left (\frac{V_t^G}{V^0_t}  \right),  \quad t \in [0,T].
\end{equation}
Note that, since $G_T=0$ if $\tau<T$, then
\begin{equation}\label{eq:vg}V^G_t  \I_{\{\tau \leq  t\}}= V^0_t\I_{\{\tau \leq  t\}}  \quad t \in [0,T].
\end{equation}
Therefore, the indifference price $p^\alpha$ is given by
\begin{equation}\label{eq:indiff-price}
p_t^\alpha= \frac{1}{\alpha} \left(\log (V_t^G)-\log(V^0_t)\right)\I_{\{\tau > t\}}  \quad t \in [0,T],
\end{equation}
provided that $V^G_t$, $V^0_t>0$  on $\{\tau>t\}$.

\section{Optimization problems via BSDEs} \label{sec:bsdeGEN}

The goal of this section is to characterize dynamically the value processes $V^G$ and $V^0$ given in \eqref{VG} and \eqref{V0} respectively, and corresponding to the stochastic control problems with and without the insurance derivative, by using a BSDE-based approach.
The BSDE method works well in non-Markovian settings where the  classical stochastic control approach based on the Hamilton-Jacobi-Bellman equation does not apply.  Several papers (see, e.g. \citet{el1997backward,ceci2011utility, lim2015portfolio} and references therein) deal with stochastic optimization problems in Finance by means of BSDEs. Moreover, this approach is also well suited to solve stochastic control problems under partial information in presence of an infinite-dimensional filter process, which is the situation in our paper, see e.g. \citet{ceci2012utility, ceci2010optimal, papanicolaou2019}, where partially observed power utility maximization problems are solved by applying this approach.

First, we define some spaces that are used throughout the sequel:
\begin{itemize}
\item $L^2(W;\widetilde \bG)$ (respectively $L_{loc}^2(W;\widetilde \bG)$)
is the set of $\R$-valued $\widetilde \bG$-predictable processes $u=\{u_t, \ t \in [0,T]\}$ such that
\begin{equation}\label{int_cond}
\esp{\int_0^T |u_s|^2\ud s} < \infty \quad \left(\mbox{respectively}\ \int_0^T |u_s|^2\ud s < \infty \ \P-\mbox{a.s.}\right).
\end{equation}
Moreover, $L^2(W;\widetilde \bF)$ (respectively $L_{loc}^2(W;\widetilde \bF)$) is the set  of $\R$-valued $\widetilde \bF$-predictable processes $u=\{u_t, \ t \in [0,T]\}$ satisfying \eqref{int_cond}.
\item $L^p(M^\tau)$ (respectively $L_{loc}^p(M^\tau)$) for $p=1,2$ is the set of all $\R$-valued $\widetilde \bG$-predictable processes $\eta=\{\eta_t, \ t \in [0,T]\}$ such that
\begin{gather}
\esp{\int_0^T |\eta_s|^p(1-H_s)\pi_s(\lambda)\ud s} < \infty\\ \left(\mbox{respectively}\ \int_0^T |\eta_s|^p(1-H_s)\pi_s(\lambda)\ud s < \infty,\ \P-\mbox{a.s.}\right).
\end{gather}
\end{itemize}

\subsection{The problem with life insurance liabilities} \label{sec:bsde}

We first aim to characterize the value process $V^G$ in \eqref{VG}.

The following proposition is a general verification result.
\begin{proposition}\label{VER}
If there exists  a $\widetilde \bG$-adapted process $D=\{D_t,\ t \in [0,T]\}$ such that
\begin{itemize}
\item[(i)] $D_T = \alpha G_T$;
\item[(ii)] $\{e ^{-\alpha X^\theta_t + D_t},\ t \in [0,T]\}$  is a $(\widetilde \bG,\P)$-submartingale for any ${\boldsymbol{\theta}} \in \A(\widetilde \bG)$

then, $V_t^G \geq  e^{D_t}$, for every $t \in [0,T]$, $\P$-a.s..
\end{itemize}
 If in addition
 \begin{itemize}
\item [(iii)] $\{e ^{-\alpha X^{{\boldsymbol{\theta}}^*}_t + D_t},\ t \in [0,T]\}$  is a $(\widetilde \bG,\P)$-martingale  for some ${\boldsymbol{\theta}}^* \in \A(\widetilde \bG)$
\end{itemize}

then, $V_t^G = e^{D_t}$, for every $t \in [0,T]$, $\P$-a.s. and ${\boldsymbol{\theta}}^*$ is an optimal investment strategy for Problem \ref{p1}.
\end{proposition}

\begin{proof}
Let $D$ be a $\widetilde \bG$-adapted process satisfying conditions (i) and (ii). Then
$$
\esp{e ^{-\alpha (X^\theta_T - G_T)} \Big{|} \widetilde \G_t } = \esp{e ^{-\alpha X^\theta_T + D_T}  \Big{|} \widetilde \G_t}
\geq e^{-\alpha X^\theta_t + D_t},
$$
for any ${\boldsymbol{\theta}} \in \A(\widetilde \bG)$, which implies
$
\esp{e^{- \alpha \left(\int_t^T {\boldsymbol{\theta}}^\top_u \frac{\ud {\mathbf{S}}_u}{{\mathbf{S}}_u} - G_T\right)} \Big{|} \widetilde \G_t }  \geq  e^{D_t}$, hence  $V_t^G \geq  e^{D_t}$ for every $t \in [0,T]$, $\P$-a.s..

Moreover, if  $\{e ^{-\alpha X^{{\boldsymbol{\theta}}^*}_t + D_t},\ t \in [0,T]\}$  is a $(\widetilde \bG,\P)$-martingale  for some ${\boldsymbol{\theta}}^* \in \A(\widetilde \bG)$, one obtains 

$$
\esp{e^{- \alpha \left(\int_t^T {\boldsymbol{\theta}}^\top_u \frac{\ud {\mathbf{S}}_u}{{\mathbf{S}}_u} - G_T\right)} \Big{|} \widetilde \G_t }  \geq  e^{D_t} = \esp{e^{- \alpha \left(\int_t^T {\boldsymbol{\theta}_u^*}^\top \frac{\ud {\mathbf{S}}_u}{{\mathbf{S}}_u} - G_T\right)} \Big{|} \widetilde \G_t }  , \quad  {\boldsymbol{\theta}} \in \A(\widetilde \bG).
$$
This implies
$$
{{\rm ess}\inf}_{{\boldsymbol{\theta}} \in \mathcal A_t(\widetilde \bG)}  \esp{e^{- \alpha \left(\int_t^T {\boldsymbol{\theta}}^\top_u \frac{\ud {\mathbf{S}}_u}{{\mathbf{S}}_u} - G_T\right)} \Big{|} \widetilde \G_t } = e^{D_t} = \esp{e^{- \alpha \left(\int_t^T {{\boldsymbol{\theta}}_u^*}^\top \frac{\ud {\mathbf{S}}_u}{{\mathbf{S}}_u} - G_T\right)} \Big{|} \widetilde \G_t},
$$
which concludes the proof.
\end{proof}

Next, we characterize the optimal strategy and the log-value process $\log V^G$ via the solution of a BSDE with quadratic-exponential driver.

\begin{theorem}\label{verification}
Let $(U^G, \gamma^1, \gamma^2, \gamma^3, \gamma^4)$, with $U^G$ bounded, $\gamma^i \in L^2(W;\widetilde \bG)$, $i=1,2,3$,
$\gamma^4 \in L^1(M^{\tau})$, be a solution to the BSDE
\begin{align}
&U^G_t = \alpha G_T - \int_t^T \sum_{i=1}^3\gamma^i_s \ud W^i_s - \int_t^T   \gamma^4_s  \ud M^\tau_s - \int_t^T  \widetilde f(s, \gamma^1_s,  \gamma^2_s,  \gamma^3_s, \gamma^4_s ) \ud s ,\label{eq:BSDEq-e}
\end{align}
where
\begin{align}
&\widetilde f(t,\gamma^1, \gamma^2,  \gamma^3,  \gamma^4 )= - (e^{ \gamma^4} - \gamma^4 - 1)(1 - H_t) \pi_{t}(\lambda)   - \frac{1}{2} ((\gamma^1)^2+( \gamma^2)^2+(\gamma^3)^2) \\
&\quad + \frac{1}{2} \left(\frac{\mu^S(t, Y_t)}{\sigma^S(t, Y_t)}+ \gamma^1\right)^2+ \frac{1}{2} \frac{\left(\mu^B(t, \mu_t, Y_t)+c^B(t,\mu_t,Y_t)\gamma^2+d^B(t,\mu_t,Y_t) \gamma^3\right)^2}{c^B(t,\mu_t,Y_t)^2+d^B(t,\mu_t,Y_t)^2}.
\end{align}
Then, $V^G_t \geq e^{U^G_t}$, for every $t \in [0,T]$, $\P$-a.s. and
$$
\left\{\E \left( \int_0^. (e^{\gamma^4_s} -1)  \ud M^\tau_s  + \int_0^. \sum_{i=1}^3\gamma^i_s \ud W^i_s \right)_t,\ t \in [0,T]\right\}
$$
is a bounded $(\widetilde \bG,\P)$-martingale. The notation $\E$ stands for the stochastic exponential.
Moreover, if ${\boldsymbol{\theta}}^*=(\theta_t^{1,*},\theta_t^{2,*})^\top \in \A(\widetilde \bG)$, with
\begin{align}
\theta_t^{1,*} &=  \frac {\mu^S(t, Y_t) } { \alpha \sigma^S(t,Y_t)^2}+\frac { \gamma_t^1 } { \alpha \sigma^S(t,Y_t)}, \quad t \in [0,T], \label{theta1*}\\
\theta_t^{2,*}& = \frac { \mu^B(t, \mu_t,Y_t) +c^B(t, \mu_t,Y_t) \gamma^2_t + d^B(t, \mu_t,Y_t)  \gamma^3_t} { \alpha[c^B(t, \mu_t,Y_t)^2 + d^B(t, \mu_t,Y_t)^2]}, \quad t \in [0,T],  \label{theta2*}
\end{align}
then,  ${\boldsymbol{\theta}}^*$ is an optimal strategy in the class  $\A(\widetilde \bG)$ and $V^G_t = e^{U^G_t}$.
\end{theorem}

\begin{proof}
For any ${\boldsymbol{\theta}} \in \A(\widetilde \bG)$, we apply the It\^o product rule to compute $e^{-\alpha X_t^\theta+U_t^G}$, for every $t \in [0,T]$.
By \eqref{eq:BSDEq-e}, we get that
\begin{equation}\label{eq:decomp}
\begin{split}
& \ud \left(e^{-\alpha X_t^\theta+U_t^G}\right)  \\
& = \ud M_t^{U,\theta} + e^{-\alpha X_t^\theta}\left\{f(t, \gamma_t^1e^{U_{t-}^G},\gamma_t^2e^{U_{t-}^G},\gamma_t^3e^{U_{t-}^G},e^{U_{t-}^G})  - f_\alpha(t,\gamma_t^1e^{U_{t-}^G},\gamma_t^2e^{U_{t-}^G},\gamma_t^3e^{U_{t-}^G},e^{U_{t-}^G},\theta_t^1,\theta_t^2)\right\}\ud t,
\end{split}
\end{equation}
where 
the function $f$ 
is given by
\begin{align}
& f(t,\gamma_t^1e^{U_{t-}^G},\gamma_t^2e^{U_{t-}^G},\gamma_t^3e^{U_{t-}^G},e^{U_{t-}^G})\\
&=e^{U_{t-}^G}\left[\widetilde f(t,\gamma_t^1,\gamma_t^2,\gamma_t^3,\gamma_t^4,U_{t-}^G)+\frac{1}{2}\left((\gamma_t^1)^2+(\gamma_t^2)^2+(\gamma_t^3)^2\right)+\left(e^{\gamma_t^4}-1-\gamma_t^4\right)\pi_t(\lambda)(1-H_t)\right]\\
&= \frac{1}{2} e^{U_{t-}^G}\left(\frac{\mu^S(t, Y_t)}{\sigma^S(t, Y_t)}+ \gamma^1_t\right)^2+ \frac{1}{2} e^{U_{t-}^G}\frac{\left(\mu^B(t, \mu_t, Y_t)+c^B(t,\mu_t,Y_t)\gamma_t^2+d^B(t,\mu_t,Y_t) \gamma_t^3\right)^2}{c^B(t,\mu_t,Y_t)^2+d^B(t,\mu_t,Y_t)^2},
\end{align}
and the function $f_\alpha$ is given by
\begin{align}
&f_\alpha(t,r^1,r^2,r^3,v,\theta^1,\theta^2) \\
& = \alpha \ v \left[\theta_t^1\mu^S(t,Y_t)+\theta_t^2\mu^B(t,\mu_t,Y_t)\right] \\
&\quad + \alpha\left[r^1\theta_t^1\sigma^S(t,Y_t) + r^2 \theta_t^2c^B(t,\mu_t,Y_t) + r^3\theta_t^2d^B(t,\mu_t,Y_t)\right]\\
& \quad \quad - \frac{1}{2} \alpha^2 v\left[(\theta_t^1\sigma^S(t,Y_t))^2+(\theta_t^2)^2\left(
c^B(t,\mu_t,Y_t)^2+d^B(t,\mu_t,Y_t)^2\right)\right],\label{def:driver}
\end{align}
while the process $M^{U,\theta} = \{M_t^{U,\theta},\ t \in [0,T]\}$ is defined as
\begin{align}
M_t^{U,\theta} & :=  \int_0^te^{-\alpha X_u^\theta+U_u^G}\left(\gamma_u^1-\alpha \theta_u^1\sigma^S(u,Y_u)\right)\ud W_u^1 \\
& \quad + \int_0^te^{-\alpha X_u^\theta+U_u^G}\left(\gamma_u^2-\alpha \theta_u^2c^B(u,\mu_u,Y_u)\right)\ud W_u^2\\
& \quad + \int_0^te^{-\alpha X_u^\theta+U_u^G}\left(\gamma_u^3-\alpha \theta_u^2d^B(u,\mu_u,Y_u)\right)\ud W_u^3 + \int_0^te^{-\alpha X_u^\theta+U_u^G}(e^{\gamma_u^4}-1) \ud M_u^\tau,\label{def:MU}
\end{align}
for every $t \in [0,T]$, and it is a $(\widetilde \bG,\P)$-local martingale.
Let us observe that  for any ${\boldsymbol{\theta}} \in \A(\widetilde \bG)$,
\begin{align}
f(t,\gamma_t^1e^{U_{t-}^G},\gamma_t^2e^{U_{t-}^G},\gamma_t^3e^{U_{t-}^G},e^{U_{t-}^G}) \!  & = \! {{\rm ess}\sup}_{{\boldsymbol{\overline \theta}} \in \mathcal A(\widetilde \bG)}
 f_\alpha(t,\gamma_t^1e^{U_{t-}^G},\gamma_t^2e^{U_{t-}^G},\gamma_t^3e^{U_{t-}^G},e^{U_{t-}^G},\overline \theta_t^1, \overline \theta_t^2) \\
 & \geq
 f_\alpha(t,\gamma_t^1e^{U_{t-}^G},\gamma_t^2e^{U_{t-}^G},\gamma_t^3e^{U_{t-}^G},e^{U_{t-}^G},\theta_t^1,\theta_t^2).
\end{align}
Hence,  $$\ud A_t^{\theta}:=  e^{-\alpha X_t^\theta}\left\{f(t, \gamma_t^1e^{U_{t-}^G},\gamma_t^2e^{U_{t-}^G},\gamma_t^3e^{U_{t-}^G},e^{U_{t-}^G})  - f_\alpha(t,\gamma_t^1e^{U_{t-}^G},\gamma_t^2e^{U_{t-}^G},\gamma_t^3e^{U_{t-}^G},e^{U_{t-}^G},\theta_t^1,\theta_t^2)\right\}\ud t$$ is an increasing process and by \eqref{eq:decomp}, we get
\begin{equation}\label{eq:M2}e^{-\alpha X_t^\theta+U_t^G} = e^{-\alpha x_0+U_0^G} + M^{U,\theta}_t + A_t, \quad t \in [0,T].\end{equation}

Since $U^G$ is bounded and $M^{U,\theta}$ given in \eqref{def:MU} is a $(\widetilde \bG,\P)$-local martingale, denoting by $\{\tau_n\}$ a localizing sequence  and  by using \eqref{eq:M2}, we get
$$
\esp{A_{T\wedge\tau_n}^{\theta}} \leq C  \big ( \esp{ e^{-\alpha X^\theta_{T\wedge\tau_n}}} + 1\big ),
$$
for some positive constant $C$. Finally, since the family \eqref{eq:integ_3} is $\P$-uniformly integrable, we have that $\esp{A_{T}^{\theta}} < \infty$ and
that $M^{U,\theta}$ is a $(\widetilde \bG,\P)$-martingale for every ${\boldsymbol{\theta}} \in \A(\widetilde \bG)$.  Hence  $\{e^{-\alpha X_t^\theta+U_t^G},\ t \in [0,T]\}$, for any $\theta \in \bar  \A(\widetilde \bG)$, is a $(\widetilde \bG,\P)$-submartingale and by Proposition \ref{VER}  we get that $V^G_t \geq e^{U^G_t}$, for every $t \in [0,T]$, $\P$-a.s..

By Dol\'eans-Dade formula it follows
$$
\E \left( \int_0^. (e^{\gamma^4_s} -1)  \ud M^\tau_s  \right)_t =e^{\int_0^t \gamma^4_s \ud M^\tau_s - \int_0^t (e^{\gamma^4_s} - 1- \gamma^4_s) (1 -H_s) \pi_s(\lambda) \ud s},
$$
and
\begin{align}
e^{U_t^G  - U_0^G}  = & \E \left ( \int_0^. (e^{\gamma^4_s} -1)  \ud M^\tau_s  + \int_0^. \sum_{i=1}^3\gamma^i_s \ud W^i_s \right)_t  e^{\frac{1}{2} \int_0^t \left(\frac{\mu^S(s, Y_s)}{\sigma^S(s, Y_s)}+ \gamma_s^1\right)^2 \ud s}\\
& e^{\frac{1}{2} \int_0^t \left[ \frac{(\mu^B(s,\mu_s,Y_s) )^2 + (c^B(s, \mu_s,Y_s) \gamma^2_s  + d^B(s, \mu_s,Y_s) \gamma^3_s)^2 }{c^B(s, \mu_s,Y_s)^2 + d^B(s, \mu_s,Y_s)^2}\right] \ud s}.\label{urra}
\end{align}
Then, the stochastic exponential $\left\{\E \left( \int_0^. (e^{\gamma^4_s} -1)  \ud M^\tau_s  + \int_0^. \sum_{i=1}^3\gamma^i_s \ud W^i_s \right)_t,\ t \in [0,T]\right\}$ is a $(\widetilde \bG, \P)$-bounded martingale since $U^G$ is bounded.

Finally, if ${\boldsymbol{\theta}}^* \in \A(\widetilde \bG)$ one obtains that $\{e^{-\alpha X_t^{\boldsymbol{\theta}^*} +U_t^G},\ t \in [0,T]\}$ is a $(\widetilde \bG,\P)$-martingale. Hence, again by Proposition \ref{VER}, we get that $V^G_t =e^{U^G_t}$ and ${\boldsymbol{\theta}}^*$ is an optimal control.



\end{proof}

\begin{proposition}\label{prop:suff_cond}
The strategy ${\boldsymbol{\theta}}^*=(\theta_t^{1,*},\theta_t^{2,*})^\top$ defined by \eqref{theta1*}-\eqref{theta2*} satisfies integrability conditions \eqref{eq:integ_1} and \eqref{eq:integ_2} of
Definition \ref{def:selffinancing}.


Moreover, if the functions $\ds \frac{\mu^S(t,y)} {\sigma^S(t,y)}$  and $\ds \frac{\mu^B(t,\mu,y)} {\sqrt{ (c^B(t,\mu,y))^2 +  (d^B(t,\mu,y))^2}}$ are bounded, the family \eqref{eq:integ_3} in Definition \ref{def:selffinancing} is $\P$-uniformly integrable and
${\boldsymbol{\theta}}^*\in \mathcal{A}(\widetilde{\bG})$.
\end{proposition}
\begin{proof}
In the first part of the proof we show that
 $$
 \int_0^T \left\{( \theta^{1,*}_t \sigma^S(t, Y_t))^2 +  (\theta^{2,*}_t)^2 (c^B(t, \mu_t, Y_t)^2 + d^B(t, \mu_t, Y_t)^2 )\right\} \ud t< \infty, \quad \P-\mbox{a.s.}.
 $$
By Assumption \ref{mgloc}, we get that $\int_0^T (\theta_s^{1,*}\sigma^S(s,Y_s))^2\ud s < \infty$, $\P$-a.s., since $\gamma^1 \in L^2(W;\widetilde \bG)$ and
\begin{align}
& \int_0^T  (\theta^{2,*}_s)^2 \left(c^B(s, \mu_s, Y_s)^2 + d^B(s, \mu_s, Y_s)^2 \right) \ud s \\
&\leq  \frac{2}{\alpha^2} \int_0^T \left(\mu^B(s, \mu_s, Y_s)^2+ \frac {\left(c^B(s, \mu_s,Y_s) \gamma^2_s + d^B(s, \mu_s,Y_s)  \gamma^3_s\right)^2} {c^B(s, \mu_s,Y_s)^2 +d^B(s, \mu_s,Y_s)^2}\right)\ud s \\
& \leq \frac{4}{\alpha^2} \left\{\int_0^T \mu^B(s, \mu_s, Y_s)^2 \ud s+ \int_0^T \frac {c^B(s, \mu_s,Y_s)^2} {c^B(s, \mu_s,Y_s)^2 + d^B(s, \mu_s,Y_s)^2} (\gamma^2_s)^2\ud s \right.\\
& \quad\left. + \int_0^T\!\!\!\! \frac {d^B(s, \mu_s,Y_s)^2} {c^B(s, \mu_s,Y_s)^2 + d^B(s, \mu_s,Y_s)^2} (\gamma^3_s)^2\ud s\right\}\\
&\leq \frac{4}{\alpha^2} \left\{\int_0^T \mu^B(s, \mu_s, Y_s)^2 \ud s+ \int_0^T (\gamma^2_s)^2\ud s + \int_0^T (\gamma^3_s)^2\ud s\right\}< \infty,\quad \P-\mbox{a.s},
\end{align}
where the last inequality holds since $\gamma^2,\ \gamma^3 \in L^2(W;\widetilde \bG)$. Moreover, thanks to Assumption \ref{mgloc} and since $\gamma^1,\ \gamma^2,\ \gamma^3 \in L^2(W;\widetilde \bG)$, it is easy to check that the integrability condition $\int_0^T\{|\theta^{1,*}_t\mu^S(t, Y_t)|+|\theta^{2,*}_t\mu^B(t,\mu_t,Y_t)|\}\ud t< \infty$, $\P$-a.s., 
is satisfied.

For the second part of the statement,  using a direct computation we get that 
$$
e^{- \alpha X^{\theta^*}_t}  =  e^{-\alpha x_0} e^ {-(U_t^G  - U_0^G)}\E \left( \int_0^. (e^{\gamma^4_s} -1)  \ud M^\tau_s  - \int_0^. \sum_{i=1}^3\Gamma^i_s \ud W^i_s \right)_t,
$$
 for all $t \in [0,T]$, where we set
 \begin{align*}
 \Gamma^1_t &= \frac{\mu^S(t, Y_t)} {\sigma^S(t, Y_t)},\\
   \Gamma^2_t & = \frac{\mu^B(t, \mu_t, Y_t) c^B(t, \mu_t, Y_t) } {(c^B(t, \mu_t, Y_t))^2 + (d^B(t, \mu_t, Y_t))^2} + \frac{d^B(t, \mu_t, Y_t)(c^B(t, \mu_t, Y_t) \gamma^3_t - d^B(t, \mu_t, Y_t) \gamma^2_t)}{(c^B(t, \mu_t, Y_t)^2 + (d^B(t, \mu_t, Y_t))^2},   \\
 \Gamma^3_t & = \frac{\mu^B (t, \mu_t, Y_t)d^B(t, \mu_t, Y_t) } {(c^B(t, \mu_t, Y_t))^2 + (d^B(t, \mu_t, Y_t))^2} + \frac{c^B(t, \mu_t, Y_t)(
 d^B(t, \mu_t, Y_t) \gamma^2_t - c^B(t, \mu_t, Y_t) \gamma^3_t )}{(c^B(t, \mu_t, Y_t))^2 + (d^B(t, \mu_t, Y_t))^2}.
 \end{align*}
Assume that $\ds \frac{\mu^S(t,y)} {\sigma^S(t,y)}$  and $\ds \frac{\mu^B(t,\mu,y)} {\sqrt{ (c^B(t,\mu,y))^2 +  (d^B(t,\mu,y))^2}}$ are bounded. By the Dol\'eans-Dade formula we obtain that
\begin{align}
&\E \left( \int_0^. (e^{\gamma^4_s} -1)  \ud M^\tau_s  - \int_0^. \sum_{i=1}^3\Gamma^i_s \ud W^i_s \right)_t \nonumber\\
&\quad = \E \left( \int_0^. (e^{\gamma^4_s} -1)  \ud M^\tau_s \right)_t \E \left(-  \int_0^. \Gamma^1_s \ud W^1_s\right)_t  \E \left( - \int_0^.  \Gamma^2_s \ud W^2_s  -  \int_0^.  \Gamma^3_s \ud W^3_s \right)_t.\label{eq:expon}
\end{align}
The first two stochastic exponentials on the right-hand side of equation \eqref{eq:expon} are  uniformly integrable $(\widetilde \bG,\P)$-martingales. We now prove that the third stochastic exponential is also a uniformly integrable $(\widetilde \bG,\P)$-martingale.


Let $\widetilde W^i$, $i=1,2$ be the $(\widetilde \bG,\P)$-Brownian motions given  by
 $$\ud \widetilde W^1_s = \frac{c^B (s, \mu_s, Y_s) } {\sqrt{c^B(s, \mu_s, Y_s)^2 + d^B(s, \mu_s, Y_s)^2 }} \ud W^2_s +  \frac{d^B (s, \mu_s, Y_s) } {\sqrt{c^B(s, \mu_s, Y_s)^2 + d^B(s, \mu_s, Y_s)^2 }} \ud W^3_s,$$
 $$\ud \widetilde W^2_s = \frac{ - d^B (s, \mu_s, Y_s) } {\sqrt{c^B(s, \mu_s, Y_s)^2 + d^B(s, \mu_s, Y_s)^2 }} \ud W^2_s + \frac{c^B (s, \mu_s, Y_s) } {\sqrt{c^B(s, \mu_s, Y_s)^2 + d^B(s, \mu_s, Y_s)^2 }} \ud W^3_s.
 $$
Then, we can rewrite the stochastic exponential as
\begin{align}
\E \left( - \int_0^.  \Gamma^2_s \ud W^2_s  -  \int_0^.  \Gamma^3_s \ud W^3_s \right)_t & = \E \left( - \int_0^.  \frac{\mu^B (s, \mu_s, Y_s) } {\sqrt{c^B(s, \mu_s, Y_s)^2 + d^B(s, \mu_s, Y_s)^2 }} \ud \widetilde W^1_s \right)_t \nonumber \\
&\times e^{\int_0^t  \frac{ d^B (s, \mu_s, Y_s) c^B(s, \mu_s, Y_s)} { c^B(s, \mu_s, Y_s)^2 + d^B(s, \mu_s, Y_s)^2 } \gamma^2_s \gamma^3_s \ud s} \nonumber \\
& \times \E \left( - \int_0^. \frac{ \gamma^2_s d^B (s, \mu_s, Y_s) } {\sqrt{c^B(s, \mu_s, Y_s)^2 + d^B(s, \mu_s, Y_s)^2 }} \ud \widetilde W^2_s \right)_t \nonumber \\
&\times \E \left( \int_0^. \frac{ \gamma^3_s c^B (s, \mu_s, Y_s) } {\sqrt{c^B(s, \mu_s, Y_s)^2 + d^B(s, \mu_s, Y_s)^2 } }\ud \widetilde W^2_s \right)_t. \label{urra2}\end{align}
The stochastic exponentials on the right-hand side of  equality \eqref{urra2} are $(\widetilde \bG,\P)$-uniformly integrable martingales. Moreover, by equation \eqref{urra} and  since
\begin{align*}
&  \ds e^{\int_0^t  \frac{ d^B (s, \mu_s, Y_s) c^B(s, \mu_s, Y_s)} { c^B(s, \mu_s, Y_s)^2 + d^B(s, \mu_s, Y_s)^2 } \gamma^2_s \gamma^3_s \ud s}  \leq
e^{\frac{1}{2} \int_0^t\frac{(c^B(s, \mu_s,Y_s) \gamma^2_s  + d^B(s, \mu_s,Y_s) \gamma^3_s)^2 } {c^B(s, \mu_s,Y_s)^2 + d^B(s, \mu_s,Y_s)^2} \ud s} \\
&\qquad\leq
e^{U_t^G  - U_0^G}  \E \left( \int_0^. (e^{\gamma^4_s} -1)  \ud M^\tau_s  + \int_0^. \sum_{i=1}^3\gamma^i_s \ud W^i_s \right)_t^{-1},\end{align*}
we have that
 $$\esp{  \sup_{s \in[0,T] } e ^{\int_0^t  \frac{ d^B (s, \mu_s, Y_s) c^B(s, \mu_s, Y_s)} { c^B(s, \mu_s, Y_s)^2 + d^B(s, \mu_s, Y_s)^2 } \gamma^2_s \gamma^3_s \ud s} } < \infty,$$
which concludes the proof.
\end{proof}

In view of Theorem \ref{verification} it is crucial to provide existence and uniqueness of the solution to BSDE \eqref{eq:BSDEq-e}.  Therefore, we conclude this section by discussing Examples \ref{ex11} and \ref{C11},  for which  the BSDE has a simpler form and existence and uniqueness of the solution can be derived by applying known results in the literature, see, e.g. \cite[Theorem 3.5]{becherer2006bounded} and \cite[Theorem 11.1.1]{delong2013backward}.


First, we study the setting of Example \ref{ex11} where the risky asset price process and the longevity bond price process are not affected by the stochastic factor $Y$. In this case equation \eqref{eq:BSDEq-e}  reduces to
\begin{align}
U^G_t = \alpha G_T - \int_t^T \sum_{i=1}^2\gamma^i_s \ud W^i_s - \int_t^T \gamma^4_s  \ud M^\tau_s - \int_t^T h(s, \gamma_s^1, \gamma_s^2, \gamma^4_s)  \ud s, \label{eq:BSDEq-e1}
\end{align}
where
\begin{align}
& h(t, \gamma^1, \gamma^2, \gamma^4) = - (e^{\gamma^4} - \gamma^4 - 1) (1 - H_t) \pi_{t}(\lambda)\\
& \qquad +\frac{1}{2}\left\{\left(\frac{\mu^S(t)}{\sigma^S(t)}\right)^2+\left(\frac{\mu^B(t, \mu_t)}{c^B(t, \mu_t)}\right)^2+2\frac{\mu^S(t)}{\sigma^S(t)}\gamma^1 + 2\frac{\mu^B(t, \mu_t)}{c^B(t, \mu_t)}\gamma^2\right\}.
\end{align}
Here, under the assumption that the functions $\ds \frac{\mu^S(t,y)}{\sigma^S(t,y)}$  and $\ds \frac{\mu^B(t,\mu)}{c^B(t,\mu)}$ are bounded,  existence and uniqueness of the solution $(U^G, \gamma^1,\gamma^2, \gamma^4)$ with  $U^G$ bounded, $\gamma^i \in L^2(W;\widetilde \bG), i=1,2$,
$\gamma^4 \in L^1(M^{\tau})$ is proved in \citet[Theorem 4.1]{becherer2006bounded}. Specifically, by Theorem \ref{verification} and Proposition \ref{prop:suff_cond} we get that
\begin{align}
{\boldsymbol{\theta}}^*=(\theta_t^{1,*},\theta_t^{2,*})^\top =\left(  \frac {\mu^S(t) } { \alpha \sigma^S(t)^2}+\frac {\gamma_t^1 } { \alpha \sigma^S(t)},\  \frac { \mu^B(t, \mu_t) } { \alpha c^B(t, \mu_t)^2 }+\frac {\gamma^2_t } { \alpha  c^B(t, \mu_t)}\right)^\top,
\end{align}
for every $t \in [0,T]$, belongs to $\A(\widetilde \bG)$ and then it is an optimal investment strategy. Notice that requiring that functions $\ds \frac{\mu^S(t,y)}{\sigma^S(t,y)}$  and $\ds \frac{\mu^B(t, \mu)}{c^B(t, \mu)}$ are bounded corresponds to have bounded  market prices of risk.

Second, we investigate the optimal value and the optimal strategies in the framework of Example \ref{C11}. Here, the log-value process is the solution of the BSDE
\begin{align}
&U^G_t = \alpha G_T - \int_t^T \gamma^1_s \ud W^1_s - \int_t^T \gamma^3_s \ud W^3_s- \int_t^T   \gamma^4_s  \ud M^\tau_s - \int_t^T  \widetilde h(s, \gamma^1_s,   \gamma^3_s, \gamma^4_s ) \ud s ,
\end{align}
where
\begin{align*}
&\widetilde h(t,\gamma^1, \gamma^3,  \gamma^4 )=  -(e^{\gamma^4} - \gamma^4 - 1) (1 - H_t) \pi_{t}(\lambda)\\
& \qquad +\frac{1}{2}\left\{\left(\frac{\mu^S(t, Y_t)}{\sigma^S(t,Y_t)}\right)^2+\left(\frac{\mu^B(t, Y_t)}{d^B(t, Y_t)}\right)^2+2\frac{\mu^S(t, Y_t)}{\sigma^S(t, Y_t)}\gamma^1 + 2\frac{\mu^B(t, Y_t)}{c^B(t, Y_t)}\gamma^3\right\}
\end{align*}
and the optimal strategy ${\boldsymbol{\theta}}^*=(\theta_t^{1,*},\theta_t^{2,*})^\top \in \A(\widetilde \bG)$, is
\begin{align}
\theta_t^{1,*} &=  \frac{\mu^S(t, Y_t) }{ \alpha \sigma^S(t,Y_t)^2}+\frac { \gamma_t^1 } { \alpha \sigma^S(t,Y_t)}, \quad t \in [0,T],\\
\theta_t^{2,*}& = \frac{\mu^B(t, Y_t) }{ \alpha d^B(t,Y_t)^2}+\frac { \gamma_t^3 } { \alpha d^B(t,Y_t)}, \quad t \in [0,T].
\end{align}
Under the assumption that the market prices of risk $\ds \frac{\mu^S}{\sigma^S}$  and $\ds \frac{\mu^B}{d^B}$ are bounded, we still have existence and uniqueness of the solution $(U^G, \gamma^1,\gamma^3, \gamma^4)$ with  $U^G$ bounded, $\gamma^i \in L^2(W;\widetilde \bG), i=1,3$,
$\gamma^4 \in L^1(M^{\tau})$.


\subsection{The pure investment problem} \label{sec:bsde2}

First, note that Problem \ref{p2} corresponds to a special case of Problem \ref{p1}, choosing $G_T=0$ and available information level given by $\widetilde \bF$.
Therefore, we solve Problem \ref{p2} by applying similar techniques to those given in the previous subsection and characterize the log value process $\log V^0$ and the optimal investment strategy in terms of the solution of another BSDE, which, in this case, has quadratic driver.

\begin{theorem}\label{verification0}
Let $(U^0, \phi^1, \phi^2, \phi^3)$, with $U^0$ bounded, $\phi^i \in L^2(W;\widetilde \bF)$, $i=1,2,3$,
 be a solution to the BSDE
\begin{equation}\label{eq:BSDEver0}
 U^0_t =  - \int_t^T \sum_{i=1}^3\phi^i_s \ud W^i_s  - \int_t^T \widetilde f^0(s,\phi^1_s,  \phi^2_s, \phi^3_s ) \ud s,
\end{equation}
where
\begin{align}
 \widetilde f^0(t,\psi^1, \psi^2, \psi^3)=& -  \frac{1}{2} \left((\psi^1)^2+(\psi^2)^2+( \psi^3)^2\right) + \frac{1}{2} \left(\frac{\mu^S(t, Y_t)}{\sigma^S(t, Y_t)}+\psi^1\right)^2\\
& +\frac{1}{2}  \frac{\left(\mu^B(t, \mu_t, Y_t)+c^B(t,\mu_t,Y_t)\psi^2+d^B(t,\mu_t,Y_t) \psi^3\right)^2}{(c^B(t,\mu_t,Y_t))^2+(d^B(t,\mu_t,Y_t))^2}. \label{eq:generator0}
\end{align}
Then, $V^0_t \geq e^{U^0_t}$, for every $t \in [0,T]$, $\P$-a.s. and
$$
\left\{\E \left( \int_0^. \sum_{i=1}^3\phi^i_s \ud W^i_s \right)_t,\  t \in [0,T]\right\}
$$
is a bounded $(\widetilde \bG,\P)$-martingale.
Moreover, if ${\boldsymbol{\vartheta}}^*=(\vartheta_t^{1,*},\vartheta_t^{2,*})^\top \in \A(\widetilde \bF)$, with
\begin{align}
\vartheta_t^{1,*} &=  \frac {\mu^S(t, Y_t) } { \alpha \sigma^S(t,Y_t)^2}+\frac {\phi_t^1 } { \alpha \sigma^S(t,Y_t)},\\
\vartheta_t^{2,*}& = \frac { \mu^B(t, \mu_t,Y_t) + c^B(t, \mu_t,Y_t) \phi^2_t + d^B(t, \mu_t,Y_t)  \phi^3_t} { \alpha[  (c^B(t, \mu_t,Y_t) )^2 + (d^B(t, \mu_t,Y_t) )^2]},
\end{align}
for every $t \in [0,T]$, then  $V^0_t = e^{U^0_t}$ and ${\boldsymbol{\vartheta}}^*$ is an optimal strategy.
\end {theorem}

The proof follows the same lines as that of Theorem \ref{verification}.

Analogously to Proposition \ref{prop:suff_cond},  under the hypothesis that the functions $\displaystyle \frac{\mu^S(t,y)}{\sigma^S(t,y)}$  and $\displaystyle \frac{\mu^B(t,\mu,y)}{\sqrt{ (c^B(t,\mu,y))^2 +  (d^B(t,\mu,y))^2}}$ are bounded, we get that ${\boldsymbol{\vartheta}}^*=(\vartheta_t^{1,*},\vartheta_t^{2,*})^\top \in \A(\widetilde \bF)$. We also observe that equation \eqref{eq:BSDEver0} is a BSDE with quadratic generator given by \eqref{eq:generator0}.
In this case, existence and uniqueness of the solution are provided, for instance, in \citet[Theorem 7.3.3]{zhang2017backward}, whereas existence and uniqueness of the solution of the BSDE \eqref{eq:BSDEq-e} is studied in the next section.

\section{The indifference price of the pure endowment} \label{sec:indiff-price}

We recall that the value processes $V^G$ and $V^0$ corresponding to the investment problem with and without the derivative, coincide for all $T>t>\tau$. This, in turn impliest that, in order to compute the indifference price given in \eqref{eq:indiff-price}, we only need to study the solution of  BSDE \eqref{eq:BSDEq-e} over the stochastic interval  $\llbracket 0, \tau \wedge T\rrbracket$. Since $G_T=\xi \I_{\{\tau > T\}}$, see  Definition \ref{eq:pure endowment}, this corresponds to consider the following BSDE with random time horizon
\begin{align}
& U^G_t \! = \alpha \xi \I_{\{ \tau > T\}} \! - \! \int_{t\wedge \tau}^{T\wedge \tau} \!\! \sum_{i=1}^3 \gamma^i_s \ud W^i_s \! -\! \int_{t\wedge \tau}^{T \wedge \tau}\!\!\!\! \gamma^4_s  \ud M^\tau_s  \!- \! \int_{t\wedge \tau}^{T\wedge \tau} \!\!\!\!\widetilde f(s,\gamma^1_s,  \gamma^2_s,  \gamma^3_s,  \gamma^4_s ) \ud s,\qquad{} \label{eq:BSDEq-eRH}
\end{align}
for every $t \in [0,T]$,
which is equivalent to equation \eqref{eq:BSDEq-e} over the stochastic time interval $\llbracket 0, \tau \wedge T\rrbracket$.
According to  \citet[Theorem 4.3]{kharroubi2013mean} and \citet[Proposition 4.1]{jeanblanc2015utility}, we introduce a BSDE in the Brownian filtration $\widetilde \bF$, stopped at $\tau$, and establish an equivalence result with the solution of BSDE \eqref{eq:BSDEq-eRH} given in Lemma \ref{ex} below.\\


We recall that on the time interval  $\{ t < \tau\wedge T\}$ the process $\pi(\lambda)$ coincides  with the $\widetilde \bF$-adapted process  $\widehat  \pi(\lambda)$ given by \eqref{eq:hat_pi}.

\begin{lemma}\label{ex}
Let  $(\widehat U, \widehat \gamma^1, \widehat \gamma^2, \widehat \gamma^3)$, where $\widehat U$ is $\widetilde \bF$-adapted and bounded, and $\widehat \gamma^i \in L^2(W;\widetilde \bF)$, $i=1,2,3$,
 be a solution to the BSDE
 \begin{align}
&\widehat U_t = \alpha \xi - \int_t^T \sum_{i=1}^3\widehat \gamma^i_s \ud W^i_s -\int_t^T \left(\frac{1}{2} \left(\frac{\mu^S(s, Y_s)}{\sigma^S(s, Y_s)}\right)^2+\frac{\mu^S(s, Y_s)}{\sigma^S(s, Y_s)} \widehat \gamma^1_s\right) \ud s\\
& - \int_t^T  \frac{\frac{1}{2}(\mu^B(s,\mu_s, Y_s))^2+ \mu^B(s,\mu_s, Y_s)\left(c^B(s,\mu_s,Y_s)\widehat \gamma^2_s+d^B(s,\mu_s,Y_s)\widehat \gamma^3_s\right)}{c^B(s,\mu_s,Y_s)^2+d^B(s,\mu_s,Y_s)^2} \ud s,\\
&  + \int_t^T \left(\left(e^{ -\widehat U_s} - 1\right)\widehat \pi_{s}(\lambda)   +   \frac{(d^B(s,\mu_s,Y_s)\widehat \gamma^2_s -  c^B(s,\mu_s,Y_s)\widehat \gamma^3_s)^2}{ 2 [c^B(s,\mu_s,Y_s)^2+d^B(s,\mu_s,Y_s)^2] }  \right) \ud s\label{eq:BSDE_BM}
\end{align}
for every $ t \in[0,T]$,
where $\widehat \pi(\lambda)$ is given by \eqref{eq:hat_pi}. Then, $(U^G, \gamma^1, \gamma^2,  \gamma^3, \gamma^4)$ defined as
$$
U^G_t = \widehat U_t \I_{\{ t < \tau\}}, \quad  \gamma^i_t = \widehat \gamma^i_t \I_{\{ t < \tau\}}, \quad i = 1,2,3 \quad
\gamma^4_t  = - \widehat U_{t^-} \I_{\{ t \leq \tau\}},
 $$
is a solution of the BSDE \eqref{eq:BSDEq-eRH}, where $\gamma^i \in L^2(W;\widetilde \bG)$, $i=1,2,3$, $U^G$ is $\widetilde
\bG$-adapted and bounded, and $\gamma^4  \in L^1(M^{\tau})$.
\end{lemma}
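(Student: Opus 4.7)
The plan is to construct the claimed solution of the random-horizon BSDE \eqref{eq:BSDEq-eRH} by applying It\^o's product rule to $\widehat U_t(1-H_t)$ and rewriting the outcome in terms of the $\widetilde\bG$-martingale $M^\tau$. This is the classical reduction approach of \citet{kharroubi2013mean} and \citet{jeanblanc2015utility} invoked in the paragraph preceding the lemma. It rests on three ingredients already available: the continuity of $\widehat U$; the martingale invariance property between $\widetilde\bF$ and $\widetilde\bG$ recalled in Section \ref{sec:tau}, so that the $W^i$ remain $\widetilde\bG$-Brownian motions; and the identity $\pi_t(\lambda)=\widehat\pi_t(\lambda)$ on $\{t<\tau\}$, which under Assumption \ref{ass:lambda} follows from the filtering result of Appendix \ref{appendix:filtering} recalled in \eqref{eq:hat_pi1}.

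The first concrete step is the integration by parts. Since $\widehat U$ is continuous and $H$ is of finite variation, $[\widehat U, H]=0$ and
\begin{equation*}
d\bigl(\widehat U_t(1-H_t)\bigr)=(1-H_{t-})\,d\widehat U_t - \widehat U_{t-}\,dH_t.
\end{equation*}
Substituting the BSDE \eqref{eq:BSDE_BM} into $d\widehat U_t$ and $dH_t=dM^\tau_t+(1-H_{t-})\pi_{t-}(\lambda)\,dt$ into $dH_t$, the Brownian integrands become $(1-H_{t-})\widehat\gamma^i_t=\widetilde\gamma^i_t$; the integrand against $dM^\tau$ is $-\widehat U_{t-}$, which may be replaced by $-\widehat U_{t-}\I_{\{t\le\tau\}}=\widetilde\gamma^4_t$ since $dM^\tau$ vanishes on $\{t>\tau\}$; and, after using $\pi=\widehat\pi$ on the support $\{t\le\tau\}$ of $1-H_{t-}$, the remaining drift is $-(1-H_{t-})[\widehat f_t+\widehat U_{t-}\widehat\pi_{t-}(\lambda)]\,dt$, where $\widehat f$ denotes the driver of \eqref{eq:BSDE_BM}.

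The main step is the algebraic verification that this drift coincides with $\widetilde f(t,\widetilde\gamma^1_t,\widetilde\gamma^2_t,\widetilde\gamma^3_t,\widetilde\gamma^4_t)\,dt$, where $\widetilde f$ is the driver in \eqref{eq:BSDEq-e}. Plugging $\widetilde\gamma^i=\widehat\gamma^i$ and $\widetilde\gamma^4=-\widehat U$, the exponential-jump term $-(e^{\widetilde\gamma^4}-\widetilde\gamma^4-1)\widehat\pi(\lambda)$ recombines with $\widehat U\widehat\pi(\lambda)$ coming from the compensator of $H$ and the $(e^{-\widehat U}-1)\widehat\pi(\lambda)$ piece inside $\widehat f$. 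For the continuous part, expanding $\tfrac12(\mu^S/\sigma^S+\widehat\gamma^1)^2$ and $\tfrac12(\mu^B+c^B\widehat\gamma^2+d^B\widehat\gamma^3)^2/[(c^B)^2+(d^B)^2]$ and using the orthogonal decomposition
\begin{equation*}
(c^B\widehat\gamma^2+d^B\widehat\gamma^3)^2+(d^B\widehat\gamma^2-c^B\widehat\gamma^3)^2=\bigl((c^B)^2+(d^B)^2\bigr)\bigl((\widehat\gamma^2)^2+(\widehat\gamma^3)^2\bigr),
\end{equation*}
the ``hedgeable'' squares cancel against the $\tfrac12\sum_i(\widetilde\gamma^i)^2$ pieces, leaving only the basis-risk residual $(d^B\widehat\gamma^2-c^B\widehat\gamma^3)^2/[2((c^B)^2+(d^B)^2)]$ that appears in \eqref{eq:BSDE_BM}. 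This coefficient-by-coefficient identification is the main obstacle: it is the one place where the specific shape of $\widetilde f$ and of the driver of \eqref{eq:BSDE_BM} have to be compared in detail.

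Finally I would check the regularity statements. Adaptedness of $\widetilde U^G$, $\widetilde\gamma^i$ and $\widetilde\gamma^4$ to $\widetilde\bG$ is immediate as products of $\widetilde\bF$-adapted or $\widetilde\bF$-predictable quantities with $\widetilde\bG$-predictable indicators, and boundedness of $\widetilde U^G$ follows from $|\widetilde U^G_t|\le\|\widehat U\|_\infty$. Since $|\widetilde\gamma^i_t|\le|\widehat\gamma^i_t|$ and $\widehat\gamma^i\in L^2(W;\widetilde\bF)$ for $i=1,2,3$, we obtain $\widetilde\gamma^i\in L^2(W;\widetilde\bG)$ via the martingale invariance property (the $W^i$ keep the same predictable quadratic variation under the enlarged filtration). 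For $\widetilde\gamma^4\in L^1(M^\tau)$, Assumption \ref{ass:lambda} gives $(1-H_t)\pi_t(\lambda)\le b$, hence
\begin{equation*}
\mathbb E\!\left[\int_0^T|\widetilde\gamma^4_s|(1-H_s)\pi_s(\lambda)\,ds\right]\le bT\|\widehat U\|_\infty<\infty.
\end{equation*}
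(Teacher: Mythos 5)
Your proposal is correct and follows exactly the paper's argument: the paper's own proof is the single sentence ``apply the It\^o product rule to $\widehat U_t(1-H_t)$ and observe that $\widetilde U^G_{T\wedge\tau}=\alpha\xi\I_{\{\tau>T\}}$'', and your expansion---the integration by parts with $[\widehat U,H]=0$, the substitution $\ud H_t=\ud M^\tau_t+(1-H_{t-})\pi_{t-}(\lambda)\,\ud t$, the identity $\pi=\widehat\pi$ before $\tau$, and the orthogonal decomposition that isolates the basis-risk residual---supplies precisely the details the authors omit. One remark: the coefficient-by-coefficient matching you assert holds for the intended equations but not for the displays as literally printed (the sign of $\int\widetilde f$ in \eqref{eq:BSDEq-eRH} is inconsistent with \eqref{eq:BSDEq-e}, and the first two drift integrals in \eqref{eq:BSDE_BM} must carry a minus sign for the drifts to match); these are sign typos in the paper rather than gaps in your argument.
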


\begin{proof}
To get the result, we apply the It\^o product rule to $U^G_t = \widehat U_t \I_{\{ t < \tau\}}=  \widehat U_t (1-H_t)$  and observe that  $U^G_{T\wedge \tau} = \alpha \xi  \I_{\{  \tau > T \}}$.
\end{proof}

\noindent By Lemma \ref{ex}, it is clear that existence and uniqueness of the solution of BSDE \eqref{eq:BSDEq-eRH} follows from existence and uniqueness of the solution of equation \eqref{eq:BSDE_BM}, which is a quadratic-exponential BSDE, only driven by Brownian motions. Using this argument we get the following result.

\begin{proposition}\label{exist}
Assume that the functions $\displaystyle \frac{\mu^S(t,y)}{\sigma^S(t,y)}$  and $\displaystyle \frac{\mu^B(t,\mu,y)}{\sqrt{ (c^B(t,\mu,y))^2 +  (d^B(t,\mu,y))^2}}$ are bounded. Then, there exists a unique solution $(U^G, \gamma^1, \gamma^2, \gamma^3, \gamma^4)$ to BSDE \eqref{eq:BSDEq-eRH} where  $\gamma^i \in L^2(W;\widetilde \bG)$, $i=1,2,3$, $U^G$ is $\widetilde
\bG$-adapted and  bounded, $\gamma^4  \in L^1(M^{\tau})$ such that
$$
\int_0^t \sum_{i=1}^3\gamma^i_s \ud W^i_s + \int_0^t ( e ^{\gamma^4_s} -1) \ud M^\tau_s
$$
is a BMO$(\widetilde \bG)$-martingale.
\end{proposition}

\begin{proof}
Existence and uniqueness of the solution $(\widehat U, \widehat \gamma^1, \widehat \gamma^2, \widehat \gamma^3)$, with $\widehat U \in L^2(W;\widetilde \bF)$ bounded, and $\widehat \gamma^i \in L^2(W;\widetilde \bF), i=1,2,3$, to equation \eqref{eq:BSDE_BM} follow from the same argument used in the proof of \citet[Theorem 4.1]{jeanblanc2015utility}.
Precisely, by Lemma \ref{lemma:density} in Appendix \ref{appendix:tech_res} and boundedness of $\lambda$,  hypotheses (H1) and (H2) in \citet[Section 2.2]{jeanblanc2015utility} hold. Moreover, the driver is of the form
$$
(e^{-u}-1)\widehat \pi(\lambda) - \widehat g(t,\widehat \gamma^1,\widehat \gamma^2, \widehat \gamma^3),
$$
where $g$ is a map from $[0,T] \times \R \times \R$ to $\R$ defined as
\begin{align*}
&g(t,\widehat \gamma^1, \widehat \gamma^2, \widehat \gamma^3)\!=\frac{(d^B(t,\mu_t,Y_t)\widehat \gamma^2 -  c^B(t,\mu_t,Y_t)\widehat \gamma^3)^2}{ 2 [c^B(t,\mu_t,Y_t)^2+d^B(t,\mu_t,Y_t)^2] } \!+\! \frac{1}{2} \!\left(\frac{\mu^S(t, Y_t)}{\sigma^S(t, Y_t)}\right)^2\!\!+\frac{\mu^S(t, Y_t)}{\sigma^S(t, Y_t)} \widehat \gamma^1\\
& \qquad + \frac{\frac{1}{2}(\mu^B(t,\mu_t, Y_t))^2 + \mu^B(t,\mu_t, Y_t)\left(c^B(t,\mu_t,Y_t)\widehat \gamma^2+d^B(t,\mu_t,Y_t)\widehat \gamma^3_t\right) }{c^B(t,\mu_t,Y_t)^2+d^B(t,\mu_t,Y_t)^2}.
\end{align*}
For every $(\widehat \gamma^1,\widehat \gamma^2, \widehat \gamma^3) \in \R \times \R\times \R$, $g(\cdot,\widehat \gamma^1,\widehat \gamma^2, \widehat \gamma^3)$ is $\widetilde \bF$-progressively measurable.
It is also easy to check that for $(\widehat \gamma^1,\widehat \gamma^2, \widehat \gamma^3)=(0,0,0)$, $g(t,0, 0,0)=0$, for every $t \in [0,T]$, and that $g$ is Lipschitz with respect to $\widehat \gamma^1$, $\widehat \gamma^2$ and $\widehat \gamma^3$, which imply that Assumption 4.1 in \citet{jeanblanc2015utility} holds.
Then, by Lemma \ref{ex} we get existence of  a solution to BSDE \eqref{eq:BSDEq-eRH} and \citet[Lemma 4.1]{jeanblanc2015utility} yields uniqueness.
\end{proof}

Finally, by gathering the results we have the following characterization of the indifference price process of the pure endowment introduced in \eqref{eq:pure endowment}.
\begin{proposition}\label{prop:prezzo}
Assume that the functions $\displaystyle \frac{\mu^S(t,y)} {\sigma^S(t,y)}$  and $\displaystyle \frac{\mu^B(t,\mu,y)}{\sqrt{ (c^B(t,\mu,y))^2 +  (d^B(t,\mu,y))^2}}$ are bounded. Let $U^0$ be the unique bounded and $\widetilde \bF$-adapted solution to equation \eqref{eq:BSDEver0} and let $\widehat U$ be the unique bounded $\widetilde \bF$-adapted solution to equation \eqref{eq:BSDE_BM}. Then, the indifference price $p^\alpha$ of the pure endowment is given by
\begin{equation}
p_t^\alpha= \frac{1}{\alpha} \left(\widehat U_t - U_t^0\right)\I_{\{\tau > t\}}  \quad t \in [0,T].
\end{equation}
\end{proposition}

\subsection{The indifference price and the utility indifference strategy in the market model of Example \ref{ex11}}

In this paragraph, we consider the setting of Example \ref{ex11}. We recall that the dynamics of the risky asset price process $S^1$ and the longevity bond price process $S^2$ are respectively described  by
\begin{align}
 \ud S_t^1 & = S_t^1\left(\mu^S(t)\ud t +\sigma^S(t) \ud W^1_t\right), \quad S_0^1 = s_0^1 \in \R^+,\\
\ud S^2_t & = S^2_t\left(\mu^B(t, \mu_t)\ud t +c^B(t,\mu_t)\ud W_t^2\right), \quad S_0^2=s_0^2\in \R^+
\end{align}
and the process $\mu$ follows equation \eqref{def:index_intensity}.
Here, the filtration $\widetilde \bF$ corresponds to the natural filtration of Brownian motions $W^1$ and $W^2$. 
We aim to provide a more explicit representation for the indifference price of a pure endowment policy, whose payoff is given by  \eqref{eq:pure endowment}, in this special setting.

We also assume that the functions $\ds \frac{\mu^S(t,y)}{\sigma^S(t,y)}$ and $\ds \frac{\mu^B(t,\mu)}{c^B(t,\mu)}$ are bounded.

We define a probability measure $\overline \P$ equivalent to $\P$ on $\widetilde \G_T$ such that the processes $\overline W^1=\{\overline W_t^1,\ t \in [0,T]\}$, $\overline W^2=\{\overline W_t^2,\ t \in [0,T]\}$ given by
\begin{align}
\overline W_t^1 & := W_t^1 + \int_0^t \frac{\mu^S(u)}{\sigma^S(u)} \ud u, \\
\overline W_t^2 & := W_t^2 + \int_0^t \frac{\mu^B(u,\mu_u)}{c^B(u,\mu_u)} \ud u,
\end{align}
for every $t \in [0,T]$, are $(\widetilde \bG,\overline \P)$-Brownian motions and the $(\widetilde \bG, \overline \P)$-mortality intensity is still given by  $\pi(\lambda)(1-H)$. Note that in particular $\overline W^1$ and $\overline W^2$ are $(\widetilde \bF,\overline \P)$-Brownian motions and the restriction of  $\overline \P$ on $\widetilde \F_T$ represents the unique martingale measure on the complete primary financial-insurance market, given by the money market account, the stock and the longevity bond.

Now, recall that the log-value process $U^G$ solves Equation \eqref{eq:BSDEq-e1}. 
Then, under $\overline \P$ we get that $U^G$ is the solution of the following BSDE
\begin{align}
U^G_t = \alpha G_T - \int_t^T \sum_{i=1}^2\gamma^i_s \ud \overline W^i_s - \int_t^T \gamma^4_s  \ud M^\tau_s - \int_t^T \overline h(s, \gamma^4_s)  \ud s,
\end{align}
for every $t \in [0,T]$, where the function $\overline h(t, \gamma^4)$ is given by
\begin{align}
\overline h(t, \gamma^4) = - (e^{\gamma^4} - \gamma^4 - 1) (1 - H_t) \pi_{t}(\lambda)
+\frac{1}{2}\left(\frac{\mu^S(t)}{\sigma^S(t)}\right)^2+\frac{1}{2}\left(\frac{\mu^B(t, \mu_t)}{c^B(t, \mu_t)}\right)^2.
\end{align}
As in Lemma \ref{ex}, we can restrict to the Brownian filtration $\widetilde \bF$ by introducing the process $\widehat U$, which satisfies the following BSDE
\begin{align}
&\widehat U_t = \alpha \xi -\! \int_t^T \!\!\!\sum_{i=1}^2\widehat \gamma^i_s \ud \overline W^i_s -\frac{1}{2} \int_t^T \!\! \left[\left(\frac{\mu^S(s)}{\sigma^S(s)}\right)^2 \! +\left(\frac{\mu^B(s,\mu_s)}{c^B(s,\mu_s)}\right)^2 \right]\ud s
+\! \int_t^T \!\!\!\left(e^{ -\widehat U_s} - 1\right)\widehat \pi_{s}(\lambda)\ud s,\quad{}\label{eq:Uhat}
\end{align}
for every $ t \in[0,T]$. Furthermore, the log-value process in the pure investment case $U^0$ satisfies
\begin{equation}
 U^0_t =  - \int_t^T \sum_{i=1}^2\phi^i_s \ud \overline W^i_s  - \int_t^T \frac{1}{2} \left[\left(\frac{\mu^S(s)}{\sigma^S(s)}\right)^2 +\left(\frac{\mu^B(s,\mu_s)}{c^B(s,\mu_s)}\right)^2 \right]\ud s,\label{eq:U0}
\end{equation}
for every $ t \in[0,T]$ and then it is given by
\begin{equation}
 U^0_t =  - \frac{1}{2} \mathbb{E}^{\overline \P} \left[ \int_t^T\left[ \left(\frac{\mu^S(s)}{\sigma^S(s)}\right)^2 +\left(\frac{\mu^B(s,\mu_s)}{c^B(s,\mu_s)}\right)^2 \right]\ud s\bigg{|}\widetilde \F_t\right],
\end{equation}
where $\mathbb{E}^{\overline \P}$ denotes the expectation with respect to $\overline \P$.

 \begin{proposition}\label{Bsde_ind}
 The indifference price is given by $p_t^\alpha = \overline p_t^\alpha \I_{\{t < \tau \}}$,  where $(\overline p^\alpha, \overline \gamma^1, \overline \gamma^2) $  is the unique solution to the following BSDE
 \begin{equation}
 \overline p_t^\alpha  = \xi - \int_t^T \sum_{i=1}^2\overline \gamma_s^i \ud \overline W_s^i + \frac{1}{\alpha}\int_t^T \left(e^{-U_s^0}e^{-\alpha \overline p_s^\alpha} - 1\right)\widehat \pi_s(\lambda) \ud s, \label{eq:pbar1}
 \end{equation}
with $\overline p^\alpha$ bounded and $ \overline \gamma^i$ are $\widetilde \bF$-predictable and such that $\mathbb{E}^{\overline \P}\left[\int_0^T|\gamma^i_s|^2 \ud s\right]<\infty$ for  $i=1,2$.

Moreover, for every $t \leq T\wedge \tau$,  the utility indifference hedging strategy is given by
 \begin{align}
\overline\theta_t=\left(\frac{\overline \gamma^1_t}{\sigma^S(t)};\frac{\overline \gamma^2_t}{c^B(t, \mu_t)}\right). \label{eq:hedg-strategy}
\end{align}

 \end{proposition}

\begin{proof}
In view of Proposition \ref{prop:prezzo}, using equations \eqref{eq:Uhat} and \eqref{eq:U0} we get that $\overline p^\alpha_t$ solves the BSDE \eqref{eq:pbar1}
for every $ t \in[0,T]$, where $\overline \gamma^i:= \ds \frac{\widehat \gamma^i-\phi^i}{\alpha}$, with $i=1,2$. Note that standard results on uniqueness for the solution of the BSDE \eqref{eq:pbar1} follow by uniform Lipschitzianity of the driver.
We now compute the so-called utility indifference strategy, that is, the deviation from the pure investment strategy for an investor who aims to maximize the expected utility of the terminal wealth in presence of the claim. 
Formally, the utility indifference strategy is a process $\overline \theta=\{\overline \theta_t,\ t \in \llbracket 0, T\wedge \tau\rrbracket\}$ defined by $\overline \theta_t:=\theta^*_t-\theta^0_t$, for every $t \in \llbracket 0, T\wedge \tau\rrbracket$.
We observe that
\begin{align}
\theta^*_t=\left(\frac{\mu^S(t)}{\alpha(\sigma^S(t))^2}+\frac{\widehat \gamma^1_t}{\alpha\sigma^S(t)};\frac{\mu^B(t, \mu_t)}{\alpha(c^B(t, \mu_t))^2}+\frac{\widehat \gamma^2_t}{\alpha c^B(t,\mu_t)}\right),\\
\theta^0_t=\left(\frac{\mu^S(t)}{\alpha(\sigma^S(t))^2}+\frac{\phi^1_t}{\alpha\sigma^S(t)};\frac{\mu^B(t, \mu_t)}{\alpha(c^B(t, \mu_t))^2}+\frac{\phi^2_t}{\alpha c^B(t, \mu_t)}\right),
\end{align}
for every $t \in \llbracket 0, T\wedge \tau\rrbracket$, and then, taking the difference we get \eqref{eq:hedg-strategy}.
\end{proof}


\subsection{The indifference price and the utility indifference strategy in the market model of Example \ref{C11}}

We now consider the framework of Example \ref{C11}, where the risky asset price process $S^1$ and the longevity bond price process $S^2$ are described by
\begin{align}
 \ud S_t^1 & = S_t^1\left(\mu^S(t, Y_t)\ud t +\sigma^S(t, Y_t) \ud W^1_t\right), \quad S_0^1 = s_0^1 \in \R^+,\\
\ud S^2_t & = S^2_t\left(\mu^B(t, Y_t)\ud t +d^B(t,Y_t)\ud W_t^3\right), \quad S_0^2=s_0^2\in \R^+
\end{align}
where the process $Y$ follows equation \eqref{def:Y}.
In this case the filtration $\widetilde \bF$ is the natural filtration of Brownian motions $W^1$ and $W^3$. We assume that the functions $\ds \frac{\mu^S(t,y)}{\sigma^S(t,y)}$ and $\ds \frac{\mu^B(t,y)}{d^B(t,y)}$ are bounded.

With abuse of notation we still denote by $\overline \P$ the probability measure equivalent to $\P$ on $\widetilde \G_T$ that makes the processes $\overline W^1=\{\overline W_t^1,\ t \in [0,T]\}$, $\overline W^3=\{\overline W_t^3,\ t \in [0,T]\}$ given by
\begin{align}
\overline W_t^1 & := W_t^1 + \int_0^t \frac{\mu^S(u, Y_u)}{\sigma^S(u, Y_u)} \ud u, \\
\overline W_t^3 & := W_t^3 + \int_0^t \frac{\mu^B(u,Y_u)}{d^B(u,Y_u)} \ud u,
\end{align}
for every $t \in [0,T]$, $(\widetilde \bG,\overline \P)$-Brownian motions and the $(\widetilde \bG, \overline \P)$-mortality intensity remains $\pi(\lambda)(1-H)$. Again we have that $\overline W^1$, $\overline W^3$ are $(\widetilde \bF,\overline \P)$-Brownian motions and the restriction of  $\overline \P$ on $\widetilde \F_T$ represents the unique martingale measure on the complete primary financial-insurance market.

The next result provides a representation for the indifference price of the pure endowment contract.
\begin{proposition}
 The indifference price is given by $p_t^\alpha = \overline p_t^\alpha \I_{\{t < \tau \}}$,  where $(\overline p^\alpha, \overline \gamma^1, \overline \gamma^3) $  is the unique solution to the following BSDE
 \begin{equation}
 \overline p_t^\alpha  = \xi - \int_t^T \overline \gamma_s^1 \ud \overline W_s^1 - \int_t^T \overline \gamma_s^3 \ud \overline W_s^3 + \frac{1}{\alpha}\int_t^T \left(e^{-U_s^0}e^{-\alpha \overline p_s^\alpha} - 1\right)\widehat \pi_s(\lambda) \ud s, \label{eq:pbar}
 \end{equation}
with $\overline p^\alpha$ bounded and $ \overline \gamma^i$ are $\widetilde \bF$-predictable and such that $\mathbb{E}^{\overline \P}\left[\int_0^T|\gamma^i_s|^2 \ud s\right]<\infty$ for  $i=1,3$, where
\begin{equation*}
 U^0_t =  - \frac{1}{2} \mathbb{E}^{\overline \P} \left[ \int_t^T\left[ \left(\frac{\mu^S(s, Y_s)}{\sigma^S(s, Y_s)}\right)^2 +\left(\frac{\mu^B(s,Y_s)}{c^B(s,Y_s)}\right)^2 \right]\ud s\bigg{|}\widetilde \F_t\right].
\end{equation*}

Moreover, for every $t \leq T\wedge \tau$,  the utility indifference hedging strategy is given by
 \begin{align}
\overline\theta_t=\left(\frac{\overline \gamma^1_t}{\sigma^S(t, Y_t)};\frac{\overline \gamma^3_t}{d^B(t, Y_t)}\right). \label{eq:strategy}
\end{align}

 \end{proposition}

The proof follows the same lines as that of Proposition \ref{Bsde_ind}.

\section{Concluding remarks}\label{sec:conclusion}

In this paper we discuss the indifference price of
a pure endowment contract in a market model where the tradable assets are given by a money market account, a stock and a longevity bond. Our modeling setting has three peculiar characteristics. First, we include mutual dependence between the financial and the insurance frameworks. Second, the mortality intensity of the individual is assumed to be different from the mortality intensity of the reference population. Indeed, due to differences in socioeconomic profiles, the hazard rates of the population typically differ from those of the policyholders. This implies in particular that even if the primary market is complete, when including the insurance derivative incompleteness arises. Third, the mortality intensity of the individual is not directly observable by the insurance company, which, at any time only knows if the individual is still alive or not. We define the indifference price of the insurance contract in terms of two stochastic control problems, with and without the insurance liability, which are solved using a BSDE approach. The optimization problem without the derivative leads to a continuous BSDE with quadratic driver for which existence and uniqueness follow by classical results. Therefore we mainly concentrate on the investment problem with the insurance derivative. This problem is formulated under partial information, and hence it first required to apply filtering techniques to derive an equivalent stochastic control problem with respect to the observable filtration only. The value process is then characterized in terms of a BSDE with a jump for which we prove existence and uniqueness of the solution up to $\tau\wedge T$ when the market prices of risks are bounded. Considering the BSDE up to time $T \wedge \tau$ is possible since the value processes of the optimization problems with and without the insurance liability coincide for all $T>t>\tau$, and this allows us to reduce the equation to a continuous BSDE with quadratic-exponential driver for which we provide existence and uniqueness of a bounded solution.
We also discuss two examples where we have a more explicit representation of the indifference price process of the pure endowment contract.
The first example corresponds to the case where the primary market is unaffected by additional observable economic and environmental factors, while the second one describes the case where the mortality intensity is a function of these additional factors modeled through a process $Y$. In both cases the primary market is complete.  However, incompleteness still arises when introducing the insurance derivative due to additional uncertainty from the policyholder mortality. In these two particular settings the indifference price process is characterized as the solution of a BSDE with respect to the market filtration $\widetilde \bF$ under a specific martingale measure $\overline \P$ equivalent to the physical measure $\P$. A similar representation cannot be obtained in the general case due to incompleteness of the primary financial-insurance market.

\begin{center}
{\bf Acknowledgements}
\end{center}
The authors are thankful to an anonymous referee for valuable suggestions that helped to improve the paper.
The authors are members of the Gruppo Nazionale per l'Analisi Matematica, la Probabilit\`a e le loro Applicazioni (GNAMPA) of the Istituto Nazionale di Alta Matematica (INdAM) and the work was partially supported by the GNAMPA project number 2017/0000327.  Part of this paper was written while the second-named author was affiliated with the School of Mathematics, University of Leeds, UK. The third-named author  was supported by {\em Universit\`a degli Studi di Perugia - Fondo ricerca di base  Esercizio 2015 - Project: Il problema della copertura di titoli derivati soggetti a rischio di credito in informazione parziale}.

\appendix

\section{Filtering}\label{appendix:filtering}
The goal of this section is to prove Proposition \ref{nuova_filter}. To this, we first need some preliminary results.

We recall here that
\begin{gather}
\widetilde{\bF}= \bF^{W^1} \vee \bF^{W^2} \vee \bF^{W^3}, \quad   \bF = \widetilde{\bF} \vee \bF^{Z}\\
\quad \widetilde{\bG} = \widetilde{\bF}  \vee \bF^H,  \quad \bG = \bF \vee \bF^H,
\end{gather}
where $W^j=\{W_t^j,\ t \in [0,T]\}$, $j=1,2,3$,  are $\P$-independent Brownian motions and $\P$-independent of $Z$. The process $H$ is the death indicator given by $H_t := \I_{\{\tau \leq t\}}$, for every $t \in [0,T]$, with $( \bG, \P)$-predictable intensity given by
$ \{ (1-H_{t^-}) \lambda(t, \mu_t, Z_{t^-}), t \in [0,T]\}.$
First, we derive the dynamics of the filter $\pi=\{\pi_t, \ t \in [0,T]\}$ which provides the conditional distribution of the unobservable process $Z$, given the observation flow $\widetilde\bG$. In other terms, we will compute
\begin{align*}
\pi_t(f)=\esp{f(Z_t)\Big{|} \widetilde \G_t}.
\end{align*}
for every $t \in [0,T]$ and for every $f \in \mathcal D$. This is essential to compute the $(\widetilde \bG, \P)$-predictable intensity of $H$, given by $ \{(1-H_{t^-}) \pi_{t^-}(\lambda), \  t \in [0,T]\},$
where  $\pi_{t}(\lambda)$ indicates $\pi_{t}(\lambda(t,\mu_t, \cdot) )$ (i.e. $ \pi_{t}(\lambda)(\omega)  = \esp{ \lambda(t,\mu_t(\omega), Z_t)  \Big{|} \widetilde\G_t}(\omega)$, $\forall \omega \in \Omega$).\\
The following result characterizes the filter as the unique strong solution of the so-called Kushner-Stratonovich equation.
\begin{proposition} \label{prop:KS}
If Assumption \ref{mgp} holds, the function $\lambda(t,\mu,z)>0 $ is continuous in $z \in \mathcal Z$
and $\sup_{(t,\mu,z)\in [0,T]\times \R^+ \times \mathcal Z} \lambda(t,\mu,z) < \infty$,  then the filter $\pi = \{\pi_t,\ t \in [0,T]\}$ is the unique strong solution to the equation
\begin{equation} \label{KS}
\pi_t(f) = f(z_0) + \int_0^t \pi_s(\L^Z f) \ud s + \int_0^t \frac{\pi_{s^-}(\lambda f) - \pi_{s^-}(\lambda)\pi_{s^-}(f) }{\pi_{s^-}(\lambda)} \ud M_s^\tau,
\end{equation}
for every $t \in [0,T]$ and for all  $f \in \mathcal{ D} $.
\end{proposition}

\begin{proof}
To prove the result we use the Innovation approach. Since $W^1, W^2$ and $W^3$ are $(\widetilde \bG, \P)$-Brownian motions and  the process $M^\tau$ given in \eqref{def:mtau} is a $(\widetilde \bG, \P)$-jump martingale, we define the Innovation process by $(W^1, W^2, W^3, M^\tau)$. \\
For every function $f \in \mathcal D$, by projecting equation \eqref{eq:f_semimg} on $\widetilde \bG$, we get
\begin{align}\label{eq:pi_semimg}
\pi_t(f)=\esp{f(Z_t)| \widetilde \G_t}= f(z_0) + \int_0^t \pi_s({\L}^Z f) \ud s + M^{(1)}_t, \quad t \in [0,T],
\end{align}
where $M^{(1)}:=\{M^{(1)}_t, \ t \in [0,T]\}$ is the $(\widetilde \bG, \P)$-martingale given by $M^{(1)}_t := \esp{M^Z_t\Big{|} \widetilde \G_t} + \esp{\int_0^t {\L}^Z f(Z_s) \ud s  \Big{|} \widetilde \G_t} - \int_0^t  \pi_s({\L}^Z f) \ud s $ (see, e.g. \citet[Chapter IV, Theorem T1]{bremaud1981}).  By the Martingale Representation Theorem (see, e.g. \citet[Theorem 3.34]{lipster2001statistics}) with respect to filtration $\widetilde \bG$ and probability measure $\P$, there exist $\widetilde \bG$-adapted processes $\widehat h^i=\{\widehat h^i_t, \ t \in [0,T]\}$ and a $\widetilde \bG$-predictable process $\widehat\varphi=\{\widehat\varphi_t, \  t \in [0,T]\}$ satisfying
\begin{align}\label{eq:integrability_cond}
\esp{\int_0^T \left(\sum_{i=1}^3 (\widehat h^i_t)^2 + |\widehat \varphi_t| \lambda(t, \mu_t, Z_t)\right) \ud t} < \infty.
\end{align}
and  such that
\begin{align*}
M^{(1)}_t= M^{(1)}_0+ \sum_{i=1}^3\int_0^t \widehat h^i_s \ud W^i_s + \int_0^t \widehat\varphi_s \ud M^\tau_s, \quad t \in [0,T].
\end{align*}
In order to identify the processes $\widehat h^i$ for $i=1,2,3$ and $\widehat \varphi$, we observe that $\esp{f(Z_t) W^i_t\Big{|} \widetilde \G_t}=\esp{f(Z_t)\Big{|} \widetilde \G_t}W^i_t$; then, by computing both quantities and comparing the finite variation parts we get that $\widehat h^i_t=0$ $\P$-a.s. for every $t \in [0,T]$. Moreover, it holds that for every process $U=\{U_t, \ t \in [0,T]\}$ of the form $U_t=\int_0^t C_s \ud H_s$ for some $(\widetilde \bG, \P)$-predictable process $C=\{C_t, \ t \in [0,T]\}$,   $\esp{f(Z_t) U_t\Big{|}\widetilde \G_t}=\esp{f(Z_t)\Big{|}\widetilde \G_t} U_t$. Then, by computing separately the right-hand side and the left-hand side of the equation and comparing the finite variation parts we get that
\begin{align*}
C_t \widehat\varphi_t (1-H_{t^-}) \pi_{t^-}(\lambda) = C_t \left( \pi_{t^-}(f \lambda)-  \pi_{t^-}(f) \pi_{t^-}(\lambda) \right)  (1-H_{t^-}), \ t \in [0,T],
\end{align*}
and since the process $C$ is arbitrary, we obtain that on the set $\{t\leq \tau\}$
\begin{align}\label{eq:hat_gamma}
 \widehat\varphi_t  = \frac{\pi_{t^-}(f \lambda)}{\pi_{t^-}(\lambda)} -  \pi_{t^-}(f), \ \P-\mbox{a.s.}.
\end{align}
Therefore, $M^{(1)}_t= M^{(1)}_0+\int_0^t \left( \frac{\pi_{s^-}(f \lambda)}{\pi_{s^-}(\lambda)} -  \pi_{s^-}(f) \right) \ud M^\tau_s$, and plugging this expression in \eqref{eq:pi_semimg} we get the result.\\
Uniqueness can be proved  as in  \citet[Theorem 3.3]{cecicolaneri2012},  by applying the Filtered Martingale Problem approach. We start by observing that for any $f \in \mathcal{ D}$ and any measurable function $\phi$ on $\{0,1\}$, we have
\begin{align*}f(Z_t) \phi(H_t) = &f(Z_0) \phi(H_0)   + M^{f,\phi}_t\\
&+ \int_0^t \{ \L^Z f(Z_s) +  [\phi( H_{s^-} + 1) - \phi( H_{s^-})] (1 -  H_{s^-}) \lambda(s, \mu_s, Z_s) \} \ud s, \end{align*}
for every $t \in [0,T]$, where  $M^{f,\phi} = \{M^{f,\phi}_t,\ t \in [0,T]\}$ is a $(\widetilde \bG, \P)$-martingale. Then, for any $\mu \in \R^+$ it follows that the pair $(Z, H)$ solves the martingale problem for the operator $\mathcal L^\mu$ defined by
\begin{align*}\L^\mu \psi(t,z, h) := \frac{\partial \psi}{\partial t} (t,z,h)
+ {\L}^Z \psi(t,z,h) +[ \psi(t,x,h + 1)-\psi(t,z,h)](1-h) \lambda(t,\mu,z)\end{align*}
for every function  $\psi$ in the domain $\mathcal{D}^\mu$ of $\mathcal L^\mu$, where $\mathcal{D}^\mu$ consists of all bounded functions  $\psi(t,x,h)$ having continuous partial derivatives with respect to $t$ and such that $\psi(t, \cdot, h) \in \mathcal{D}$, $\forall (t,h) \in [0,T] \times \{0,1\}$.
The pair $(\L^\mu, \mathcal{D}^\mu)$ satisfies Assumption \ref{mgp}, where we replace $\mathcal Z$ with  $[0,T] \times \mathcal Z \times \{0,1\}$. By \citet[Theorem 3.3]{kurtz1988unique} we get that the Filtered Martingale Problem for the operator $\L^\mu$ is well posed. Then, we can apply  \citet[Theorem 3.3]{cecicolaneri2012}, which ensures strong uniqueness.

\end{proof}
We make a few remarks. Over the set $\{t <\tau < T\}$ the filter solves a nonlinear equation given by
\begin{align}\label{eq:1n}
\pi_t(f) = f(z_0) + \int_0^t  \left( \pi_s(\L^Z f) \ud s   - \pi_{s}(\lambda f) +  \pi_{s}(\lambda)\pi_{s}(f)  \right) \ud s,
\end{align}
and at time $\tau<T$ we get that
\begin{align}\label{taun}
\pi_{\tau}(f)=  \pi_{\tau^-} (f) + \frac{\pi_{\tau^-}(\lambda f) - \pi_{\tau^-}(\lambda)\pi_{\tau^-}(f) }{\pi_{\tau^-}(\lambda)} =   \frac{\pi_{\tau^-}(\lambda f)}{\pi_{\tau^-}(\lambda)} .
\end{align}
Finally, after the jump, that is over the set  $\{\tau<t\le T\}$,  the filtering equation is linear, of the form
\begin{align}\label{eq:2n}
\pi_t(f) = \pi_{\tau}(f)+ \int_\tau^t  \pi_s(\L^Z f) \ud s.
\end{align}
In order to obtain an explicit expression for the filter, we apply a suitable change of probability measure, which allows to obtain a linear equation for the unnormalized filter, known in literature as  the Zakai equation. To this aim we introduce the process $L=\{L_t, \ t \in [0,T]\}$ by
\begin{align}\label{L}
L_t := \E \left ( \int_0^\cdot \frac{1 - \lambda(s, \mu_s, Z_{s^-}) }{ \lambda(s, \mu_s, Z_{s^-})  }  \{ \ud H_s - (1-H_{s^-}) \lambda(s, \mu_s, Z_{s^-})   \ud s\} \right )_t,
\end{align}
for every $t \in [0,T]$, where $\E $ denotes the Dol\'eans-Dade exponential. We assume that $L$ is a $( \bG, \P)$-martingale. This is implied, for instance, by the condition
\begin{equation} \label{Nov}\esp{ e^{ \int_0^T \frac{(1 - \lambda(s, \mu_s, Z_s)  )^2 }{ \lambda(s, \mu_s, Z_s)  } (1 - H_s)  \ud s}} < \infty, \end{equation}
and satisfied, in particular,  if the function $\lambda(t,\mu,z)$ is bounded from below and above. Then we define the probability measure  $\Q$ equivalent to $\P$ by
$$
\frac{\ud \Q}{\ud \P}\Big|_{\G_t} := L_t,
$$
for every $t \in [0,T]$. By the Girsanov Theorem we have that
\begin{align*}
\left\{H_t - \int_0^t (1 -H_{s^-}) \ud s, \quad t \in [0,T]\right\}
\end{align*}
is a $(\bG, \Q)$-martingale, and  the process $\{1 -H_{t^-}, \ t \in [0,T]\}$ provides the $(\bG, \Q)$-predictable intensity of $H$.
We introduce the unnormalized filter, which is the finite measure valued  process $\rho = \{\rho_t,\ t \in [0,T]\}$ given by
$$
\rho_{t}(f) := \mathbb E^\Q \left[ L^{-1}_t f(Z_t)  \Big{|}  \tilde \G_t \right], \quad t \in [0,T],
$$
for every bounded measurable function $f$. By applying the Kallianpur-Striebel formula we get that
\begin{equation} \label{KSt}
\pi_t(f)= \esp{f(Z_t)\Big{|}\widetilde \G_t} =
 \frac{\rho_t (f)}{\rho_t (1)}, \quad t \in [0,T],
\end{equation}
for every bounded and measurable function $f(z)$, where $ \rho_t (1) := \mathbb E^\Q \left[ L^{-1}_t  \Big{|}  \tilde \G_t  \right]$. The dynamics of process $\rho(1)$ can be easily computed by observing that the $(\widetilde \bG, \P)$-intensity of $H$ is given by $\{(1-H_{t^-}) \pi_{t^-}(\lambda), t \in [0,T]\}$ and the $(\widetilde \bG, \Q)$-intensity of $H$ is $\{1 -H_{t^-}, \ t \in [0,T]\}$, then we get that $\rho(1)$ is an exponential martingale satisfying the following stochastic differential equation
\begin{equation}
\ud \rho_t (1)  = \rho_{t ^-}(1)  ( \pi_{t^-}(\lambda) - 1)  (\ud H_t - (1-H_{t^-}) \ud t),  \quad \rho_0 (1)=1.
\end{equation}
Then,  by applying the product rule  to $\rho_t(f) = \pi_t(f) \rho_t(1)$ and using equation \eqref{KS},  we get that
\begin{equation} \label{Z}
\rho_t(f) = f(z_0) + \int_0^t \rho_s(\L^Z f) \ud s + \int_0^t \rho_{s^-}( f(\lambda-1)) [ \ud H_s - (1-H_{s^-}) \ud s],
\end{equation}
for every $t \in [0,T]$, where $\rho_t(\lambda)$ indicates $\rho_t(\lambda(t,\mu_t, \cdot) )$.  Over the set $\{t < \tau<T\}$ this equation reduces to
\begin{equation}\label{Z1}
\rho_t(f) = f(z_0) + \int_0^t (\rho_s(\L^Z f) - \rho_{s}( f\lambda) + \rho_{s}( f ) )\ud s
\end{equation}
and the solution can be computed explicitly, as shown in the following proposition.

\begin{proposition}\label{RHO}
Let
\begin{equation} \label{formula1}
\tilde \rho_t(f)(\omega) :=  \esp{ f(Z_t) e^{-\int_0^t ( \lambda (u,\mu_u(\omega), Z_u) - 1 ) \ud u} }, \quad t \in [0,\tau(\omega)).
\end{equation}
Then,  $\tilde \rho$ solves equation \eqref{Z1} over $\{ t < \tau\}$.   \end{proposition}

 \begin{proof}
For any fixed trajectory $t \to \mu_t(\omega)$ of process $\mu$, we set $\gamma_t := e^{-\int_0^t ( \lambda (s,\mu_s(\omega), Z_s) - 1 ) \ud s}$. By the product rule we get
$$
\ud ( f(Z_t) \gamma_t)   =  \gamma_t  [ \L^Z f(Z_t) - f(Z_t) (\lambda (t,\mu_t(\omega), Z_t) - 1 ) ] \ud t  + \gamma_t \ud M^Z_t.
$$
Now, taking expectation
$$
\esp{ f(Z_t) \gamma_t } = f(z_0) + \int_0^t \esp{ \gamma_s [ \L^Z f(Z_s) - f(Z_s) (\lambda (s,\mu_s(\omega), Z_s) - 1 ) ]  }\ud t.
$$
Then,  we get that $\esp{ f(Z_t) e^{-\int_0^t ( \lambda (s,\mu_s(\omega), Z_s) - 1 ) \ud s} }$ solves equation \eqref{Z1} for any fixed trajectory of the process $\mu$ and this concludes the proof.
\end{proof}

Finally, we give the proof of Proposition \ref{nuova_filter}. Let us observe that we do not require the assumption that $L$,
given in \eqref{L}, to be a $( \bG, \P)$-martingale.

 \begin{proof}[Proof of Proposition \ref{nuova_filter}]
Let, for $t <\tau<T$,
$$\widetilde \pi_t(f):=  \frac{\widetilde \rho_t (f)}{\widetilde \rho_t (1)},$$
where $\widetilde \rho$ given in \eqref{formula1}, for $t=\tau<T$
$$\widetilde \pi_{\tau}(f):=  \frac{\widetilde \pi_{\tau^-}(\lambda f)}{ \widetilde \pi_{\tau^-}(\lambda)},$$
and for $ \tau<t\leq T$,
$$ \widetilde \pi_t(f) :=  {\mathbb E}_{\tau,  \widetilde \pi_\tau}[ f(Z_t) ],$$
where $ {\mathbb E}_{\tau,\widetilde \pi_\tau}$ denotes the conditional expectation given the law of $Z$ at time $\tau$ equals to $\widetilde\pi_\tau$.
Then,  by a direct computation we can show that $\widetilde \pi$ solves equation \eqref{KS} and
by strong uniqueness  (see Proposition \ref{prop:KS}) we get that $\widetilde \pi = \pi$.

In particular, defining the $\widetilde \bF$-adapted process  $\widehat  \pi:=\{\widehat \pi_t, t \in [0,T]\}$ given by \eqref{eq:hat_pi},
we get that on $\{t < \tau\}$ the filter $\pi$ coincide with the process $\widehat \pi$.

\end{proof}

\section{Longevity bond price }\label{app:longevity}
We start from a filtered probability space $(\Omega, \widetilde \F, \widetilde \bF, \Q)$, where $\Q$ is a risk neutral measure equivalent to $\P$.
The objective of this section is to characterize the fair price of the longevity bond under the measure $\Q$ and get the $\P$-price dynamics via change of measure.
Let $W^{1,\Q}=\{W^{1,\Q},\ t \in [0,T]\}$, $W^{2,\Q}=\{W^{2,\Q},\ t \in [0,T]\}$, $W^{3,\Q}=\{W^{3,\Q},\ t \in [0,T]\}$ be $\Q$-independent Brownian motions and define  the density process $L^\P=\{L^\P_t, \ t \in [0,T]\}$   of $\P$ with respect to $\Q$ by
\begin{equation} \label{def:q}
L^\P_t\!:=\!\left.\frac{\ud \P}{\ud \Q}\right|_{\widetilde \F_t}\!\! \!\!= \E\! \left(\int_0^. \!\!\frac{\mu^S(u, Y_u)}{\sigma^S(u, Y_u)} \ud W_u^{1,\Q} - \!\!\int_0^. \!\!\alpha^{\mu}(u,\mu_u,Y_u) \ud W_u^{2,\Q} -\!\! \int_0^. \!\!\alpha^{Y}(u,\mu_u,Y_u) \ud W_u^{3,\Q}\right)_t,
\end{equation}
for every $t \in [0,T]$, where functions $\mu^S(t,y)$, $\sigma^S(t,y)$, $\alpha^\mu(t,\mu,y)$ and $\alpha^Y(t,\mu,y)$ are measurable and  such that $L^\P$ is an $(\widetilde \bF,\Q)$-martingale. By applying Girsanov theorem we get that processes $W^{1}=\{W^{1},\ t \in [0,T]\}$, $W^{2}=\{W^{2},\ t \in [0,T]\}$, $W^{3}=\{W^{3},\ t \in [0,T]\}$ respectively defined by
\begin{align}
W_t^{1} & := W_t^{1,\Q} - \int_0^t \frac{\mu^S(u, Y_u)}{\sigma^S(u, Y_u)} \ud u, \quad t \in [0,T],\\
W_t^{2} & := W_t^{2,\Q} + \int_0^t \alpha^{\mu}(u,\mu_u,Y_u) \ud u, \quad t \in [0,T],\\
W_t^{3} & := W_t^{3,\Q} + \int_0^t \alpha^{Y}(u,\mu_u,Y_u) \ud u, \quad t \in [0,T],
\end{align}
are $\P$-independent $\widetilde \bF$-Brownian motions. Following \citet{cairns2006pricing}, a longevity bond is defined  as a zero-coupon bond that pays out the value of the survivor or longevity index at time $T$. Then, its discounted price process $S^2$ at any time $t$ is given by
\begin{align}
S^2_t& = \espq{S_T^\mu\Big{|}\widetilde \F_t}=\espq{\exp\left(-\int_0^T \mu_s \ud s\right)\Bigg{|}\widetilde \F_t}=e^{-\int_0^t \mu_s \ud s}\espq{\exp\left(-\int_t^T \mu_s \ud s\right)\bigg{|} \widetilde \F_t}, \label{def:lbp}
\end{align}
for every $ t \in [0,T]$.
We write the dynamics of the pair $(\mu, Y)$ with respect to $\Q$ as
\begin{align}
\ud \mu_t & = ( b^\mu(t,\mu_t,Y_t) + \alpha^\mu(t,\mu_t,Y_t) ) \ud t + \sigma^\mu(t,\mu_t,Y_t) \ud W_t^{2,\Q},\quad \mu_0 \in \R^+, \\
\ud Y_t& = ( b^Y(t,Y_t) + \alpha^Y(t,\mu_t,Y_t) ) \ud t + \sigma^Y(t,Y_t) \ud W_t^{3,\Q}, \quad Y_0=y_0 \in \R.
\end{align}
Since the pair $(\mu, Y)$ is an $(\widetilde \bF, \Q)$-Markov process with infinitesimal generator  $\mathcal L^{\mu,Y}$ under $\Q$, setting
\begin{equation} \label{F}
F(t,\mu,y):=\espq{\exp\left(-\int_t^T \mu_s \ud s\right)\bigg{|}\mu_t=\mu,\ Y_t=y},
\end{equation}
we get that relation \eqref{def:lbp} can be written as
\begin{equation}\label{eq:s2f}
S_t^2=e^{-\int_0^t \mu_s \ud s}F(t,\mu_t,Y_t), \quad t \in [0,T].
\end{equation}
The function $F(t,\mu,y)$ is strictly positive and bounded from above by $1$.
If the function $F(t,\mu,y)$ is sufficiently regular, that is, $ F\in \mathcal C_b^{1,2,2}([0,T] \times \R^+ \times \R)$, then it can be characterized via the Feynman-Kac formula as the solution of
the boundary problem
\begin{equation}\label{FK}
\left\{
\begin{array}{ll}
\ds \frac{\partial F}{\partial t}(t,\mu,y) + \mathcal L^{\mu,Y}F(t,\mu,y) -\mu F(t,\mu,y) = 0, \quad (t,\mu,y) \in [0,T) \times \R^+ \times \R,\\
F(T,\mu,y)=1, \quad (\mu,y) \in \R^+ \times \R.
\end{array}
\right.
\end{equation}
To ensure that Feynman-Kac formula \eqref{F} applies, we make the following set of assumptions.
\begin{assumption}\label{ass:coeff_ass}
Functions $b^\mu(t,\mu,y),\ b^Y(t,y)$, $\alpha^\mu(t,\mu,y),\alpha^Y(t,y)$, $\sigma^\mu(t,\mu,y),\ \sigma^Y(t,y)$ are continuous in all variables and satisfy sublinear growth-conditions on $(\mu, y) \in \R^+ \times \R$, uniformly in $t \in [0,T]$. Moreover, $b^\mu(t,\mu,y),\ b^Y(t,y)$,  $\alpha^\mu(t,\mu,y),\alpha^Y(t,y)$, and $(\sigma^\mu(t,\mu,y))^2,\ (\sigma^Y(t,y))^2$ are Lipschitz continuous on $(\mu, y) \in \R^+ \times \R$, uniformly in $t \in [0,T]$, and $\sigma^\mu(t,\mu,y),\ \sigma^Y(t,y)$ are bounded from below.
\end{assumption}
The following result shows existence and uniqueness for the solution of the boundary problem \eqref{FK}.
\begin{proposition}
Under Assumption \ref{ass:coeff_ass},
there exists a unique classical solution $F$ to the boundary problem \eqref{FK} and the Feynman-Kac representation \eqref{F} holds.
\end{proposition}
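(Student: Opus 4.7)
The plan is twofold: first establish classical solvability of the terminal-value problem \eqref{FK} by appealing to standard parabolic PDE theory, and then identify the classical solution with the conditional expectation in \eqref{F} via It\^o's formula.

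For the PDE part, I would first reformulate \eqref{FK} as a forward Cauchy problem via the time reversal $s = T-t$, obtaining a linear second-order parabolic equation on $[0,T]\times \R^+\times \R$ with drift $(b^\mu+\alpha^\mu, b^Y+\alpha^Y)$, diagonal diffusion matrix with entries $(\sigma^\mu)^2$ and $(\sigma^Y)^2$, zero-order potential $-\mu \le 0$, and initial datum $1$. Under Assumption \ref{ass:coeff_ass} the drift and the squared diffusion coefficients are Lipschitz and of sublinear growth in $(\mu,y)$ uniformly in $t$, and $\sigma^\mu,\sigma^Y$ are bounded from below so the operator is locally uniformly elliptic. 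Standard results for linear parabolic Cauchy problems with such coefficients (see, e.g., Friedman, \emph{Partial Differential Equations of Parabolic Type}, or Ladyzhenskaya--Solonnikov--Ural'tseva, Ch.~IV) yield existence of a classical solution $F \in C^{1,2,2}([0,T)\times \R^+\times \R)\cap C([0,T]\times \R^+\times \R)$. Boundedness $0< F\le 1$ follows from the maximum principle together with the non-positivity of the potential $-\mu$ and the initial datum $1$; in particular $F\in C_b^{1,2,2}$.

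For the Feynman--Kac representation, define
\begin{equation}
\Phi_t := e^{-\int_0^t \mu_s \ud s}\, F(t,\mu_t,Y_t), \qquad t\in[0,T].
\end{equation}
Applying It\^o's formula under $\Q$ and using the PDE \eqref{FK} to cancel the finite variation part, one obtains
\begin{equation}
\ud \Phi_t = e^{-\int_0^t \mu_s \ud s}\left( F_\mu(t,\mu_t,Y_t)\sigma^\mu(t,\mu_t,Y_t)\, \ud W_t^{2,\Q} + F_y(t,\mu_t,Y_t)\sigma^Y(t,Y_t)\, \ud W_t^{3,\Q}\right),
\end{equation}
so $\Phi$ is an $(\widetilde \bF,\Q)$-local martingale. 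Since $0\le \Phi_t \le 1$, it is a true bounded martingale on $[0,T]$, hence $\Phi_t = \espq{\Phi_T|\widetilde \F_t}$, and using $F(T,\cdot,\cdot)=1$ and the $(\widetilde\bF,\Q)$-Markov property of $(\mu,Y)$ this rearranges to \eqref{F}, as required. Uniqueness among bounded classical solutions is then immediate: any such solution satisfies the same It\^o calculation and must therefore coincide with the conditional expectation on the right-hand side of \eqref{F}.

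The main obstacle is obtaining a \emph{global} classical solution on the unbounded domain $[0,T]\times \R^+\times \R$ with a nondegeneracy only locally uniform. The natural route is a localization on $[0,T]\times (\varepsilon, N)\times (-N,N)$, where standard Schauder interior estimates apply; one then passes $\varepsilon \downarrow 0$, $N\uparrow \infty$ using uniform $L^\infty$ bounds coming from $F\le 1$ and sublinear growth of the coefficients, together with the stochastic representation on the localized problem to control the boundary behaviour near $\mu=0$. Once the a priori bounds are in place the limiting solution inherits the $C_b^{1,2,2}$ regularity, and the rest of the argument proceeds as outlined.
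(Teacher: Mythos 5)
Your verification and uniqueness steps are fine, but your overall strategy runs in the opposite direction from the paper's, and that direction is where the soft spot lies. The paper disposes of the whole proposition by invoking \citet[Theorem 1]{heath2000martingales}: there one \emph{starts} from the probabilistically defined function $F$ in \eqref{F}, proves it is continuous (their Lemma 2), and then shows it is a classical solution by localizing on bounded subdomains, where $F$ solves a boundary value problem with continuous boundary data by the Markov property and interior Schauder estimates upgrade it to $C^{1,2,2}$; uniqueness then comes from the stochastic representation exactly as you argue. You instead try to first \emph{construct} a classical solution by PDE theory and only afterwards identify it with \eqref{F}. As literally stated, your appeal to ``standard results for linear parabolic Cauchy problems'' (Friedman, Ladyzhenskaya--Solonnikov--Ural'tseva) does not apply off the shelf: the coefficients are unbounded (only sublinear growth is assumed), and the spatial domain $\R^+\times\R$ has a boundary at $\mu=0$ at which \eqref{FK} prescribes no boundary condition, so the problem is not a genuine Cauchy problem on the whole space. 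You recognize this in your final paragraph, and the localization-plus-Schauder scheme you sketch there is essentially the content of the Heath--Schweizer theorem --- but note that their ordering makes the hard parts free: defining $F$ by \eqref{F} from the outset immediately supplies the global bound $0<F\le 1$, the correct behaviour near $\mu=0$ (the process simply lives in $\R^+$), and the boundary data for each localized problem, none of which you get for free when building the solution from the PDE side. So your route is workable but would need the localization argument carried out in full, at which point it is cleanest to reverse the direction and recover the paper's one-line citation.
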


\begin{proof}
The result follows from  \citet[Theorem 1]{heath2000martingales}.
Indeed, condition $(A2)$ in \citet[Theorem 1]{heath2000martingales} is a consequence of sublinear growth-condition  and Lipschitz continuity of the coefficients and by \citet[Lemma 2]{heath2000martingales}, $F(t,\mu,y)$ given in \eqref{F} is continuous in $[0,T] \times \R^+ \times \R$.
\end{proof}

Now, we can apply It\^o's formula to $S^2$ given in \eqref{eq:s2f} and since $F$ is solution of \eqref{FK},we get that  $S^2$ solves
\begin{equation}
\ud S_t^2  = S_t^2 \left(c^B(t,\mu_t,Y_t)\ud W_t^{2,\Q} + d^B(t,\mu_t,Y_t)\ud W_t^{3,\Q}\right), \quad S_0^2=s_0^2\in \R^+,\end{equation}
where we have set
\begin{equation}\label{def:cb-db}
c^B(t,\mu,y):=\ds \frac{\sigma^\mu(t,\mu,y)}{F(t,\mu,y)}\frac{\partial F}{\partial \mu}(t,\mu,y), \qquad d^B(t,\mu,y):=\ds\frac{\sigma^Y(t,y)}{F(t,\mu,y)}\frac{\partial F}{\partial y}(t,\mu,y),
\end{equation}
for every $(t,\mu,y) \in [0,T] \times \R^+ \times \R$. Finally, the $\P$-dynamics of the process $S^2$ is given by \eqref{eq:long_bond} where we have set $\mu^B(t, \mu_t, Y_t):=c^B(t,\mu_t,Y_t)\alpha^\mu(t,\mu_t,Y_t)+d^B(t,\mu_t,Y_t) \alpha^Y(t,\mu_t,Y_t)$.

\section{A Technical result}\label{appendix:tech_res}

\begin{lemma}\label{lemma:density}
In the modeling framework outlined in Section \ref{sec:model}, the so-called {\em density hypothesis} holds with respect to filtration $\widetilde\bF$. Precisely,
for every $t \in [0,T]$, there exists a function $\widetilde \beta(t, \cdot):\R^+ \to \R^+$, such that $(t,u) \mapsto \widetilde \beta(t,u)$ is $\widetilde \F_t \otimes \mathcal{B}(0,\infty)$-measurable and such that
\[
\P(\tau>s|\widetilde\F_t) = \int_s^\infty \widetilde \beta(t,u) \ud u, \quad s \in \R^+,
\]
and $\widetilde \beta(t,u)\I_{\{t\geq u\}}=\widetilde \beta(u,u)\I_{\{t\geq u\}}$.
\end{lemma}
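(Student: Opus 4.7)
The plan is to exhibit an explicit candidate for the conditional density, derived from the canonical construction of $\tau$, and then verify the three required properties: the integral representation, the joint measurability, and the consistency relation.

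First, I recall that $\tau = \inf\{t \geq 0 : \Lambda_t \geq \Theta\}$ with $\Lambda_t := \int_0^t \lambda(s,\mu_s,Z_s)\ud s$ and $\Theta$ a unit exponential random variable $\P$-independent of $\F_T$. Since $\widetilde \F_t \subseteq \F_T$ and each $\Lambda_s$ is $\F_T$-measurable (extending the model beyond $T$ in the natural way if needed, so that $\Lambda_\infty = \infty$ $\P$-a.s.), conditioning first on $\F_T$ and using independence of $\Theta$ yields $\P(\tau > s \mid \widetilde \F_t) = \mathbb E[e^{-\Lambda_s}\mid \widetilde \F_t]$. Using the elementary identity $e^{-\Lambda_s} = \int_s^\infty \lambda(u,\mu_u,Z_u)\, e^{-\Lambda_u}\,\ud u$ together with a conditional Fubini interchange then suggests the candidate
\[
\widetilde\beta(t,u) := \mathbb E\!\left[\left. \lambda(u,\mu_u,Z_u)\, e^{-\Lambda_u} \right| \widetilde \F_t\right],
\]
for which the required integral representation $\P(\tau > s \mid \widetilde \F_t) = \int_s^\infty \widetilde\beta(t,u)\ud u$ holds by construction.

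For joint $\widetilde \F_t \otimes \mathcal B(0,\infty)$-measurability, I observe that the integrand $u \mapsto \lambda(u,\mu_u,Z_u)\, e^{-\Lambda_u}$ is a nonnegative $\bF$-progressively measurable process; for each fixed $u$, conditioning on $\widetilde{\bF}$ produces a nonnegative $(\widetilde \bF,\P)$-martingale in $t$, and by selecting a c\`adl\`ag version (together with the Borel measurability in $u$ inherited from progressive measurability) one obtains a jointly measurable version of $\widetilde \beta$. For the consistency relation $\widetilde \beta(t,u)\I_{\{t \geq u\}} = \widetilde\beta(u,u)\I_{\{t \geq u\}}$, I fix $u \leq t$ and note that $X_u := \lambda(u,\mu_u,Z_u)\, e^{-\Lambda_u}$ is measurable with respect to $\widetilde \F_u \vee \F^Z_u$, while the complementary information in $\widetilde \F_t$ beyond $\widetilde \F_u$ is generated by the Brownian increments $\{W^j_r - W^j_u : u \leq r \leq t,\, j = 1,2,3\}$. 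By the construction in Section \ref{sec:model}, these increments are $\P$-independent of $\widetilde \F_u \vee \F^Z_u$, so a standard independence-with-conditioning argument gives $\mathbb E[X_u \mid \widetilde \F_t] = \mathbb E[X_u \mid \widetilde \F_u]$, which is precisely the stated relation.

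The main obstacle I anticipate is this last step: it requires carefully decomposing $\widetilde \F_t$ into an $\widetilde \F_u$-measurable part and the Brownian increments over $(u,t]$, and then invoking the independence of those increments from $\F^Z_u$. The subtlety is that $X_u$ is \emph{not} $\widetilde \F_u$-measurable because of the factor $Z_u$, so the reduction cannot be obtained from the tower property alone and genuinely relies on the full independence structure built into the underlying filtered probability space.
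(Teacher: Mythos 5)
Your proof is correct and follows essentially the same route as the paper's: the paper arrives at the identical candidate $\widetilde \beta(t,u)=\mathbb E\bigl[\lambda(u,\mu_u,Z_u)e^{-\int_0^u\lambda(r,\mu_r,Z_r)\ud r}\mid \widetilde\F_t\bigr]$, obtained by first invoking the density hypothesis for the Cox construction with respect to $\bF$ (citing Jeanblanc--Le Cam) and then projecting onto $\widetilde\F_t$ via the tower property and Fubini, rather than deriving it directly from $\Theta$ as you do. The only cosmetic difference is in the consistency relation: the paper freezes the $\widetilde\F_u$-measurable trajectory of $\mu$ and uses the independence of $Z$ from $\widetilde\bF$ to turn the conditional expectation into an unconditional one in $Z$ (hence independent of $t\ge u$), whereas you decompose $\widetilde\F_t$ into $\widetilde\F_u$ plus Brownian increments independent of $\F_u$ --- both arguments rest on the same independence structure built into Section \ref{sec:model}.
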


\begin{proof}
Firstly, note that the Cox construction for the random time $\tau$ describing the residual life time of the policyholder, ensures that the density hypothesis is fulfilled with respect to the filtration $\bF$, see \citet[Section 5]{jeanblanc2009progressive}.
That is, there exists a map $\beta(t, \cdot):\R^+ \to \R^+$, such that $(t,u) \mapsto \beta(t,u)$ is $\F_t \otimes \mathcal{B}(0,\infty)$-measurable and such that
\begin{equation} \label{beta}
\P(\tau>s|\F_t) = \int_s^\infty \beta(t,u) \ud u, \quad s \in \R^+,
\end{equation}
and $\beta(t,u)\I_{\{t\geq u\}}=\beta(u,u)\I_{\{t\geq u\}}$.
Precisely, in our setting $$\beta(t,u)=\esp{\lambda(u, \mu_u, Z_u) e^{-\int_0^u \lambda(r, \mu_r, Z_r)\ud r}|\F_t}.$$ Conditioning in  \eqref{beta} with respect to $\widetilde \F_t \subseteq \F_t$ and applying the Fubini theorem yield
  \begin{align*}
\P(\tau>s|\widetilde \F_t) & = \condespftilde{\int_s^\infty \beta(t,u) \ud u}\\
& = \int_s^\infty\condespftilde{\lambda(u, \mu_u, Z_u)e^{-\int_0^u \lambda(r, \mu_r, Z_r)\ud r}}\ud u\\
& = \int_s^\infty \widetilde\beta(t,u) \ud u,
  \end{align*}
where  $\widetilde \beta(t,u) := \condespftilde{\lambda(u, \mu_u, Z_u)e^{-\int_0^u \lambda(r, \mu_r, Z_r)\ud r}}$. Note that, since $\mu$ is $\widetilde \bF$-adapted and the process $Z$ is independent of $\widetilde \bF$, for any $t \ge u$
\begin{equation}
\widetilde \beta(u,u) (\omega) = \esp{\lambda(u, \mu_u(\omega), Z_u)e^{-\int_0^u \lambda(r, \mu_r(\omega), Z_r)\ud r}} =\widetilde \beta(t,u)(\omega), \quad \forall \omega \in \Omega.
\end{equation}
and this concludes the proof.
\end{proof}


\begin{thebibliography}{40}
\providecommand{\natexlab}[1]{#1}
\providecommand{\url}[1]{\texttt{#1}}
\expandafter\ifx\csname urlstyle\endcsname\relax
  \providecommand{\doi}[1]{doi: #1}\else
  \providecommand{\doi}{doi: \begingroup \urlstyle{rm}\Url}\fi

\bibitem[Aase and Persson(1994)]{aase1994pricing}
Aase, K. and Persson, S.:
\newblock Pricing of unit-linked life insurance policies.
\newblock {Scand. Actuar. J.}, 1994\penalty0 (1):\penalty0
  26--52, (1994).

\bibitem[Bayraktar et~al.(2009)Bayraktar, Milevsky, Promislow, and
  Young]{bayraktar2009valuation}
Bayraktar, E., Milevsky, M.~A.,  Promislow, S.~D. and Young, V.:
\newblock Valuation of mortality risk via the instantaneous sharpe ratio:
  applications to life annuities.
\newblock {J. Econom. Dynam. Control}, 33\penalty0
  (3):\penalty0 676--691, (2009).

\bibitem[Becherer(2003)]{becherer2003rational}
Becherer, D.:
\newblock Rational hedging and valuation of integrated risks under constant
  absolute risk aversion.
\newblock {Insurance Math. Econom.}, 33\penalty0
  (1):\penalty0 1--28, (2003).

\bibitem[Becherer(2006)]{becherer2006bounded}
Becherer, D.:
\newblock Bounded solutions to backward {S}{D}{E}s with jumps for utility
  optimization and indifference hedging.
\newblock {Ann. Appl. Probab.}, 16\penalty0 (4):\penalty0
  2027--2054, (2006).

\bibitem[Biagini et~al.(2016)Biagini, Rheinl{\"a}nder, and
  Schreiber]{biagini2016risk}
Biagini, F.,  Rheinl{\"a}nder, T. and Schreiber, I.:
\newblock Risk-minimization for life insurance liabilities with basis risk.
\newblock {Math. Financ. Econ.}, 10\penalty0
  (2):\penalty0 151--178, (2016).

\bibitem[Bielecki and Rutkowski(2004)]{bielecki2002}
Bielecki, T.R. and Rutkowski, M.:
\newblock {Credit Risk: Modeling, Valuation and Hedging}.
\newblock Springer Finance. Springer-Verlag Berlin, Heidelberg, New York, (2004).

\bibitem[Biffis(2005)]{biffis2005}
Biffis, E.:
\newblock Affine processes for dynamic mortality and actuarial valuations.
\newblock {Insurance Math. Econom.}, 37\penalty0
  (3):\penalty0 443--468, (2005).

\bibitem[Blanchet-Scalliet et~al.(2017)Blanchet-Scalliet, Dorobantu, and
  Salhi]{blanchet2017model}
Blanchet-Scalliet, C., Dorobantu, D.,  and Salhi, Y.:
\newblock A model-point approach to indifference pricing of life insurance
  portfolios with dependent lives.
\newblock {Methodol. Comput. Appl. Probab.}, pages 1--26,
  (2017).

\bibitem[Boyle and Schwartz(1977)]{boyle1977equilibrium}
Boyle, P. and Schwartz, E.:
\newblock Equilibrium prices of guarantees under equity-linked contracts.
\newblock {J. Risk Insur.}, pages 639--660, (1977).

\bibitem[Br{\'e}maud(1981)]{bremaud1981}
Br{\'e}maud, P.:
\newblock {Point processes and queues.}
\newblock Springer-{V}erlag, {H}alsted {P}ress, (1981).

\bibitem[Br{\'e}maud and Yor(1978)]{bremaud1978changes}
Br{\'e}maud, P. and Yor, M.:
\newblock Changes of filtrations and of probability measures.
\newblock {Z. Wahrscheinlichkeit}, 45\penalty0 (4):\penalty0 269--295, (1978).

\bibitem[Brennan and Schwartz(1976)]{brennan1976pricing}
Brennan, M. and Schwartz, E.:
\newblock The pricing of equity-linked life insurance policies with an asset
  value guarantee.
\newblock {J. Financ. Econ.}, 3\penalty0 (3):\penalty0
  195--213, (1976).

\bibitem[Cairns et~al.(2006)Cairns, Blake, and Dowd]{cairns2006pricing}
Cairns, A., Blake, D. and Dowd, K.:
\newblock Pricing death: Frameworks for the valuation and securitization of
  mortality risk.
\newblock {Astin Bull.}, 36\penalty0 (01):\penalty0 79--120, (2006).

\bibitem[Ceci(2011)]{ceci2010optimal}
Ceci, C.:
\newblock Optimal investment problems with marked point stock dynamics.
\newblock In F.~Russo R.~C.~Dalang, M.~Dozzi, editor, {Progress in
  Probability}, volume~63, pages 385--412. Birkh\"{a}user Verlag
  Basel/Switzerland., (2011).

\bibitem[Ceci(2012)]{ceci2012utility}
Ceci, C.:
\newblock Utility maximization with intermediate consumption under restricted
  information for jump market models.
\newblock {Int. J. Theor. Appl. Finance},
  15\penalty0 (06):\penalty0 1250040 (34 pages), (2012).

\bibitem[Ceci and Colaneri(2012)]{cecicolaneri2012}
Ceci, C. and Colaneri, K.:
\newblock Nonlinear filtering for jump diffusion observations.
\newblock {Adv. {A}ppl. {P}robab.}, 44\penalty0
  (03):\penalty0 678--701, (2012).

\bibitem[Ceci and Gerardi(2011)]{ceci2011utility}
Ceci, C. and Gerardi, A.:
\newblock Utility indifference valuation for jump risky assets.
\newblock {Decis. Econ. Finance}, 34\penalty0 (2):\penalty0
  85--120, (2011).

\bibitem[Ceci et~al.(2015)Ceci, Colaneri, and Cretarola]{ceci2015hedging}
Ceci, C., Colaneri, K. and Cretarola, A.:
\newblock Hedging of unit-linked life insurance contracts with unobservable
  mortality hazard rate via local risk-minimization.
\newblock {Insurance Math. Econom.}, 60:\penalty0 47--60,
  (2015).

\bibitem[Ceci et~al.(2017)Ceci, Colaneri, and Cretarola]{ceci2017hedging}
Ceci, C., Colaneri, K. and Cretarola, A.:
\newblock Unit-linked life insurance policies: Optimal hedging in partially
  observable market models.
\newblock {Insurance Math. Econom.}, 76:\penalty0
  149--163, (2017).

\bibitem[Dahl(2004)]{dahl2004stochastic}
Dahl, M.:
\newblock Stochastic mortality in life insurance: market reserves and
  mortality-linked insurance contracts.
\newblock {Insurance Math. Econom.}, 35\penalty0
  (1):\penalty0 113--136, (2004).

\bibitem[Delong(2009)]{delong2009indifference}
Delong, L.:
\newblock Indifference pricing of a life insurance portfolio with systematic
  mortality risk in a market with an asset driven by a {L}{\'e}vy process.
\newblock {Scand. Actuar. J.}, 2009\penalty0 (1):\penalty0
  1--26, (2009).

\bibitem[Delong(2012)]{delong2012no}
Delong, L.:
\newblock No-good-deal, local mean-variance and ambiguity risk pricing and
  hedging for an insurance payment process.
\newblock {Astin Bull.}, 42\penalty0
  (1):\penalty0 203--232, (2012).

\bibitem[Delong(2013)]{delong2013backward}
Delong, L.:
\newblock Backward Stochastic Differential Equations with Jumps and Their Actuarial
and Financial Applications
\newblock {EAA Series}, Springer-Verlag London, (2013).

\bibitem[Eichler et~al.(2017)Eichler, Leobacher, and
  Sz{\"o}lgyenyi]{eichler2017utility}
Eichler, A.,  Leobacher, G. and Sz{\"o}lgyenyi, M.:
\newblock Utility indifference pricing of insurance catastrophe derivatives.
\newblock {European Actuarial Journal}, 7\penalty0 (2):\penalty0 515--534,
  (2017).

\bibitem[El~Karoui(1981)]{karoui81}
El~Karoui, N.:
\newblock Les {A}spects {P}robabilistes du {C}ontrole {S}tochastique.
\newblock In P.L. Hennequin, editor, {Ecole d'Et{\'e} de Probabilit{\'e}s
  de Saint-Flour IX-1979}, volume 876 of {Lecture Notes in Mathematics},
  pages 73--238. Springer, Berlin, Heidelberg, (1981).

\bibitem[El~Karoui et~al.(1997)El~Karoui, Peng, and Quenez]{el1997backward}
El~Karoui, N., Peng, S. and Quenez, M.C.:
\newblock Backward stochastic differential equations in finance.
\newblock {Math. Finance}, 7\penalty0 (1):\penalty0 1--71, (1997).

\bibitem[Heath and Schweizer(2000)]{heath2000martingales}
Heath, D.  and Schweizer, M.:
\newblock Martingales versus {P}{D}{E}s in finance: an equivalence result with
  examples.
\newblock {J. Appl. Probab.}, 37\penalty0 (04):\penalty0
  947--957, (2000).

\bibitem[Henderson and Hobson(2009)]{henderson2009}
Henderson, V. and Hobson, D.:
\newblock Utility indifference pricing: {A}n overview.
\newblock In R.~Carmona, editor, {Indifference Pricing: Theory and
  Applications}, chapter~2, pages 44--73. Princeton University Press, (2009).

\bibitem[Jeanblanc and Le~Cam(2009)]{jeanblanc2009progressive}
Jeanblanc, M.  and Le~Cam, Y.:
\newblock Progressive enlargement of filtrations with initial times.
\newblock {Stochastic Process. Appl.}, 119\penalty0
  (8):\penalty0 2523--2543, (2009).

\bibitem[Jeanblanc et~al.(2009)Jeanblanc, Yor, and
  Chesney]{jeanblanc2009mathematical}
Jeanblanc, M.,  Yor, M. and Chesney, M.:
\newblock {Mathematical Methods for Financial Markets}.
\newblock Springer Science \& Business Media, (2009).

\bibitem[Jeanblanc et~al.(2015)]{jeanblanc2015utility}
Jeanblanc, M.,  Mastrolia, T.,  Possama{\"\i}, D. and R{\'e}veillac, A.:
\newblock Utility maximization with random horizon: a BSDE approach.
\newblock {Int. J. Theor. Appl. Finance},
  18\penalty0 (07):\penalty0 1550045, (2015).

\bibitem[Kharroubi et~al.(2013)]{kharroubi2013mean}
Kharroubi, I.,  Lim, T. and Ngoupeyou, A.:
\newblock Mean-variance hedging on uncertain time horizon in a market with a
  jump.
\newblock {Appl. Math. Optim.}, 68\penalty0 (3):\penalty0
  413--444, (2013).

\bibitem[Kurtz and Ocone(1988)]{kurtz1988unique}
Kurtz, T.~{G}. and  Ocone, {D}.~{L}.:
\newblock Unique characterization of conditional distributions in nonlinear
  filtering.
\newblock {Ann. Appl. Probab.}, pages 80--107, (1988).

\bibitem[Liang and Lu(2017)]{liang2017indifference}
Liang, X. and Lu, Y.
\newblock Indifference pricing of a life insurance portfolio with risky asset
  driven by a shot-noise process.
\newblock {Insurance Math. Econom.}, 77:\penalty0 119--132,
  (2017).

\bibitem[Lim and Quenez(2011)]{lim2011exponential}
Lim, T. and Quenez, M.C.:
\newblock Exponential utility maximization in an incomplete market with
  defaults.
\newblock {Electron. J. Probab.}, 16\penalty0 (53):\penalty0
  1434--1464, (2011).

\bibitem[Lim and Quenez(2015)]{lim2015portfolio}
Lim, T. and Quenez M.C.:
\newblock Portfolio optimization in a default model under full/partial
  information.
\newblock {Probab. Engrg. Inform. Sci.},
  29\penalty0 (4):\penalty0 565--587, (2015).

\bibitem[Lipster and Shiryaev(2001)]{lipster2001statistics}
Lipster, R.S. and A.N. Shiryaev, A.N.:
\newblock Conditionally {G}aussian sequences: Filtering and related problems.
\newblock In {Statistics of Random Processes II: Applications},
  chapter~13, pages 55--97. Springer Verlag, Berlin, (2001).

\bibitem[Ludkovski and Young(2008)]{ludkovski2008indifference}
Ludkovski, M. and Young, V.R.:
\newblock Indifference pricing of pure endowments and life annuities under
  stochastic hazard and interest rates.
\newblock {Insurance Math. Econom.}, 42\penalty0
  (1):\penalty0 14--30, (2008).

\bibitem[M{\o}ller(2003)]{moller2003indifference}
M{\o}ller, T.:
\newblock Indifference pricing of insurance contracts in a product space model:
  applications.
\newblock {Insurance Math. Econom.}, 32\penalty0
  (2):\penalty0 295--315, (2003).

\bibitem[{\O}ksendal(2013)]{oksendal2013}
{\O}ksendal, B.:
\newblock {Stochastic Differential Equations: an Introduction with
  Applications}.
\newblock Springer {S}cience \& {B}usiness {M}edia, (2013).

\bibitem[Papanicolaou(2019)]{papanicolaou2019}
Papanicolaou, A.:
\newblock Backward SDEs for control with partial information.
\newblock {Math. Finance}, 29\penalty0 (1):\penalty0 208--248, (2019).

\bibitem[Zhang(2017)]{zhang2017backward}
Zhang, J.:
\newblock {Backward Stochastic Differential Equations: From Linear to
  Fully Nonlinear Theory}, volume~86.
\newblock Springer, (2017).

\end{thebibliography}
\end{document}